\newcommand{\abs}[1]{\left\lvert #1 \right\rvert}
\newcommand{\ceil}[1]{\left \lceil #1 \right \rceil}
\def\ketbra#1#2{{\vert#1\rangle\!\langle#2\vert}}
\newcommand{\id}{\mathbbm{1}}
\LetLtxMacro{\oldtextsc}{\textsc}
\renewcommand{\textsc}[1]{\oldtextsc{\scalefont{1.1}#1}}
\theoremstyle{plain}
\newtheorem{theorem}{Theorem}
\newtheorem{lemma}{Lemma}
\newtheorem{corollary}[theorem]{Corollary}
\theoremstyle{definition}
\newcommand{\chemUofT}{\affiliation{%
    Chemical Physics Theory Group, Department of Chemistry, University of Toronto, Toronto, Ontario, Canada}
    }
\newcommand{\csUofT}{\affiliation{%
    Department of Computer Science, University of Toronto, Toronto, Ontario, Canada}
    }
\newcommand{\vecInst}{\affiliation{%
    Vector Institute for Artificial Intelligence, Toronto, Ontario, Canada}
    }
\newcommand{\chemEngUofT}{\affiliation{%
    Department of Chemical Engineering \& Applied Chemistry, University of Toronto, Toronto, Ontario, Canada}
    }
\newcommand{\matSciUofT}{\affiliation{%
    Department of Materials Science \& Engineering, University of Toronto, Toronto, Ontario, Canada}
    }
\newcommand{\CIFAR}{\affiliation{%
    Lebovic Fellow, Canadian Institute for Advanced Research, Toronto, Ontario, Canada}
    }
\begin{document}
\title{Efficient Quantum Algorithm for All Quantum Wavelet Transforms}
\author{Mohsen Bagherimehrab}
\email{mohsen.bagherimehrab@gmail.com}
\chemUofT\csUofT 
\author{Al\'an Aspuru-Guzik}
\chemUofT\csUofT\vecInst\chemEngUofT\matSciUofT\CIFAR

\date{\today}

\date{\today}

\begin{abstract}
Wavelet transforms are widely used in various fields of science and engineering as a mathematical tool with features that reveal information ignored by the Fourier transform. Unlike the Fourier transform, which is unique, a wavelet transform is specified by a sequence of numbers associated with the type of wavelet used and an order parameter specifying the length of the sequence. 
While the quantum Fourier transform, a quantum analog of the classical Fourier transform, has been pivotal in quantum computing, prior works on quantum wavelet transforms~(QWTs) were limited to the second and fourth order of a particular wavelet, the Daubechies wavelet.
Here we develop a simple yet efficient quantum algorithm for executing any wavelet transform on a quantum computer.
Our approach is to decompose the kernel matrix of a wavelet transform as a linear combination of unitaries (LCU) that are compilable by easy-to-implement modular quantum arithmetic operations and use the LCU technique to construct a probabilistic procedure to implement a QWT with a \textit{known} success probability.
We then use properties of wavelets to make this approach deterministic by a few executions of the  amplitude amplification strategy.
We extend our approach to a multilevel wavelet transform and a generalized version, the packet wavelet transform, establishing computational complexities in terms of three parameters:
the wavelet order~$M$, the dimension~$N$ of the transformation matrix, and the transformation level~$d$.
We show the cost is logarithmic in~$N$, linear in $d$ and superlinear in~$M$.
Moreover, we show the cost is independent of $M$ for practical applications.
Our proposed quantum wavelet transforms could be used in quantum computing algorithms in a similar manner to their well-established counterpart, the quantum Fourier transform.
\end{abstract}

\maketitle

\section{Introduction}
As a solid alternative to the Fourier transform, wavelet transforms are a relatively new mathematical tool with diverse utility that has generated much interest in various fields of science and engineering over the past four decades. Although wavelet-like functions have existed for over a century, a prominent example is what is now known as the Haar wavelet. The interest is due to the attractive features of wavelets~\cite{Mal09,Dau92,Bey92,mbm}. Such functions are differentiable, up to a particular order, and are local in both the real and dual spaces. They provide an exact representation for polynomials up to a certain order, and a simple yet optimal preconditioner for a large class of differential operators. Crucially, wavelets provide structured and sparse representations for vectors, functions, or operators, enabling data compression and constructing faster algorithms. These appealing features of wavelets and their associated transforms make them advantageous for numerous applications in classical computing over their established counterpart, the Fourier transform.

With the wavelet transforms' diverse utility and extensive use in classical computing, a natural expectation is that a quantum analog of such transforms will find applications in quantum computing, especially for developing faster quantum algorithms and quantum data compression. Wavelets have already been used in quantum physics and computation~\cite{BP13,BRS+15,BSB+22,HTC+22,ES16,Poly20,GSM+22,BNW+23}. However, prior works on developing a quantum analog for wavelet transforms are limited to a few representative cases~\cite{FW99,hoy97,LFX+18,LFX+19,LLX23}. In contrast, the quantum Fourier transform, a quantum analog of the classical Fourier transform, has been extensively used in quantum computing as a critical subroutine for many quantum algorithms.

Unlike the Fourier transform, a wavelet transform is not unique and is specified by the type of wavelet used and an order parameter. In particular, a wavelet transform is defined by a sequence of numbers, known as the filter coefficients, associated with the type of wavelet used and an even number known as the order of the wavelet that specifies the length of the sequence. Given the sequence, a unitary matrix known as the kernel matrix of the wavelet transform is constructed, the application of which on a vector yields the single-level wavelet transform of the vector. Such a transform partitions the vector into two components: a low-frequency or average component and high-frequency or difference component (see FIG.~\ref{fig:visQWTs}).
To expose the multi-scale~structure of the vector, or a function for that matter, the wavelet transform is recursively applied to the low-frequency component, yielding the multi-level wavelet transform of the vector. The wavelet packet transform is a generalization of the multi-level wavelet transform, in which the wavelet transform is recursively applied to both the low- and high-frequency components.
We refer to a quantum analog of the (single-) multi-level and packet wavelet transforms as the (single-) multi-level and packet QWTs, respectively.

This paper proposes and analyzes a conceptually simple and computationally efficient quantum algorithm for executing single-level, multi-level, and packet QWTs associated with any wavelet and any order on a quantum computer.
Our approach is based on decomposing a unitary associated with a wavelet transform in terms of a linear combination of a finite number of simple-to-implement unitaries and using the linear combination of unitaries (LCU) technique~\cite{CW12} to implement the original unitary. Specifically, we decompose the kernel matrix of the wavelet transform, associated with a wavelet of order~$M$, as a linear combination of~$M$ simple-to-implement unitaries and, by the LCU technique, construct a probabilistic procedure for implementing the single-level QWT.
The success probability of this approach is a known constant by properties of the wavelet filters. We use this known success probability to make the implementation deterministic using a single ancilla qubit and a few rounds of amplitude amplification. 

Having an implementation for the single-level QWT and recursive formulae describing the multi-level and packet wavelet transforms based on single-level transforms, we construct quantum algorithms for multi-level and packet QWTs.
We establish the computational complexity of these transformations in terms of three parameters:
the wavelet order~$M$,
the dimension of the wavelet-transform matrix~$N$,
and the level of the wavelet transform~$d$.
Without loss of generality, we assume that our main parameter of interest~$N$ is a power of two, as $N=2^n$, and report the computational costs with respect to $n$, the number of qubits that the wavelet transforms act on.

We summarize our main results on computational costs of the described transformations in the following three theorems.
We establish these theorems in subsequent sections after providing a detailed description of our algorithms. 

\begin{theorem}[Single-level QWT with logarithmic gate cost]
\label{theorem:1QWT}
    A single-level QWT on $n$ qubits, associated with a wavelet of order~$M$, can be implemented using
    $\ceil{\log_2M}+1$
    ancilla qubits and $\mathcal{O}(n)+\mathcal{O}(M^{3/2})$ \textup{\textsc{toffoli}} and elementary one- and two-qubit gates.
\end{theorem}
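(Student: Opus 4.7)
The plan is to realize the $N\times N$ kernel matrix $W_M$ as an LCU of $M$ elementary shift-type unitaries, apply the standard LCU gadget to obtain a probabilistic circuit whose success amplitude is a known function of the wavelet filter, and then deterministically realize $W_M$ by a few rounds of oblivious amplitude amplification using one extra flag qubit on top of the $\lceil\log_2M\rceil$ coefficient qubits.

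First I would derive an explicit decomposition
\begin{equation*}
W_M \;=\; \sum_{k=0}^{M-1} h_k\, U_k ,
\end{equation*}
where $\{h_k\}_{k=0}^{M-1}$ are the scaling-filter coefficients of the wavelet and each $U_k$ is, up to a fixed Pauli-type correction on the parity qubit, the cyclic shift $\lvert j\rangle\mapsto\lvert j+k\bmod N\rangle$ on the $n$-qubit data register. This decomposition is read off the filter-bank form of $W_M$: the rows come in low-/high-pass pairs that translate by two between successive pairs, and the alternating-sign rule $g_k=(-1)^k h_{M-1-k}$ for the high-pass filter can be absorbed into the parity-bit correction so that every $U_k$ is implementable by a single controlled modular increment at cost $\mathcal{O}(n)$ Toffoli gates.

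Second I would build the standard LCU circuit
\begin{equation*}
\textsc{Prep}\,\lvert 0\rangle = \frac{1}{\sqrt{s}}\sum_{k=0}^{M-1}\sqrt{|h_k|}\,\lvert k\rangle,\qquad \textsc{Sel} = \sum_{k=0}^{M-1}\lvert k\rangle\!\langle k\rvert\otimes\operatorname{sgn}(h_k)\,U_k,
\end{equation*}
with $s=\sum_k|h_k|$, and verify the LCU identity $(\textsc{Prep}^\dagger\otimes\id)\,\textsc{Sel}\,(\textsc{Prep}\otimes\id)\lvert 0\rangle\lvert\psi\rangle = s^{-1}\lvert 0\rangle W_M\lvert\psi\rangle + \lvert\perp\rangle$. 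Per invocation, \textsc{Prep} and its adjoint cost $\mathcal{O}(M)$ elementary gates each via a binary-tree state-preparation circuit on $\lceil\log_2 M\rceil$ qubits, while \textsc{Sel} is a single controlled modular adder of the $\lceil\log_2 M\rceil$-qubit register into the $n$-qubit data register, contributing $\mathcal{O}(n)$ Toffoli gates plus $\mathcal{O}(1)$ sign corrections.

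Third I would use the wavelet normalization $\sum_k|h_k|^2=1$ and the $\ell^{1}$--$\ell^{2}$ inequality to bound $s\le\sqrt{M}$, and would emphasize that $s$ is known exactly at compile time from the filter data. This allows oblivious amplitude amplification with a single additional ancilla qubit and $R=\mathcal{O}(\sqrt{M})$ rounds to rotate the success amplitude exactly to $1$, producing a deterministic implementation of $W_M$ on a register of $\lceil\log_2 M\rceil+1$ ancillas as claimed.

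Finally the gate accounting is the subtle step. The ancilla-register work is $R\cdot\mathcal{O}(M)=\mathcal{O}(M^{3/2})$. For the data-register work I would argue that the $R$ nominal invocations of \textsc{Sel} collapse to an $\mathcal{O}(n)$ net cost, using two structural facts: the $U_k$ are all powers of the same unit cyclic shift and therefore commute, and the amp-amp reflections act only on the coefficient register, so products and inverses of \textsc{Sel} across amp-amp rounds can be coherently rewritten as a single controlled modular adder whose shift amount is computed from the final coefficient-register state and applied once. The main obstacle I anticipate is exactly this decoupling of data-register cost from the amp-amp loop; making it rigorous will require tracking the parity-bit corrections through the reflections and showing they recombine into an $\mathcal{O}(1)$-size correction rather than multiplying with the loop length. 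Granting this, combining the two contributions yields the $\mathcal{O}(n)+\mathcal{O}(M^{3/2})$ total.
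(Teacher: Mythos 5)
Your overall architecture (LCU decomposition of the kernel into $M$ shift-type unitaries, prepare--select--unprepare, then oblivious amplitude amplification made exact with one extra flag qubit) matches the paper's. But there is a genuine gap in the step you yourself flag as the subtle one, and the paper avoids it by a different observation. You bound $s=\sum_k|h_k|\le\sqrt{M}$ via Cauchy--Schwarz and therefore budget $R=\mathcal{O}(\sqrt{M})$ rounds of \emph{outer} amplitude amplification, which forces you to argue that the $\mathcal{O}(\sqrt{M})$ invocations of \textsc{select} ``collapse'' to a single controlled adder. That collapsing argument fails: the select calls in $\mathcal{A}^t$ are interleaved with reflections about $\ket{0^{m}}$ conjugated by the \emph{prepare} unitaries, i.e.\ with $V^\dagger(R_{m+1}\otimes\id_n)V$ where $V=(\textsc{unprep}\otimes\id)\,\textsc{select}\,(\textsc{prep}\otimes\id)$, and these do not commute past one another so as to telescope the data-register work; if they did, amplitude amplification would cost one oracle call regardless of the initial amplitude, contradicting the standard $\Theta(1/\sqrt{p})$ query requirement. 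The paper's resolution is different and essential: the success amplitude $1/h$ with $h=\sum_\ell|h_\ell|$ is a \emph{known constant bounded below} (greater than $1/4$ across the relevant range of wavelet orders, by properties of the filters satisfying $\sum_\ell h_\ell=\sqrt2$ and $\sum_\ell h_\ell^2=1$), so after deliberately damping the amplitude to $\sin(\pi/14)$ only \emph{three} rounds of amplitude amplification are needed. Hence \textsc{select} is invoked $\mathcal{O}(1)$ times and contributes $\mathcal{O}(n)$ total; the $\mathcal{O}(M^{3/2})$ term comes not from the outer loop but from implementing \textsc{prep}/\textsc{unprep} themselves (uniform superposition plus a uniformly controlled rotation of cost $\mathcal{O}(M)$, boosted by $\mathcal{O}(\sqrt{M})$ internal amplification rounds).

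Two smaller inaccuracies: the constituent unitaries $U_k$ are not cyclic shifts $\ket{j}\mapsto\ket{j+k\bmod N}$. Because the kernel rows translate by \emph{two} between successive row pairs ($H_{ij}=h_{j-2i}$), each permutation must map $\ket{j}\mapsto\ket{(j-\ell)/2\bmod N}$ or $\ket{j}\mapsto\ket{N/2+(j+\ell-1)/2\bmod N}$ according to the parity of $j$ and $\ell$; the paper realizes this with a parity qubit, a controlled modular add/subtract, and a perfect-shuffle (qubit rotation), still at $\mathcal{O}(n)$ cost, so this is repairable but your stated form of $U_k$ would not reproduce $W$. Also, the lower half of the kernel is not an LCU summand of the upper half's shifts directly: one must first conjugate $W$ by a $\ket{1}$-controlled $\textsc{ushift}$ (a constant modular addition on the lower block) before the clean $M$-term decomposition exists. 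To close the proof you should replace the $s\le\sqrt{M}$ bound and the collapsing argument with the constant lower bound on the success amplitude and a constant number of amplification rounds.
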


\begin{theorem}[Multi-level QWT with multiplicative gate cost]
\label{theorem:dQWT}
    A d-level QWT on $n$ qubits
    can be achieved using $\ceil{\log_2M}+2$ ancilla qubits and $\mathcal{O}(dn)+\mathcal{O}(dM^{3/2})$ \textup{\textsc{toffoli}} and elementary one- and two-qubit gates.
\end{theorem}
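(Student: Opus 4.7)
The plan is to build the multi-level QWT directly from the single-level construction of Theorem~\ref{theorem:1QWT} by exploiting the recursive definition of the $d$-level wavelet transform. A $d$-level transform is obtained by first performing a single-level QWT on the full $n$-qubit register and then recursively applying a single-level QWT to the low-frequency subregister produced at the previous level. In the standard qubit encoding of a single-level QWT on $k$ qubits the low-frequency component lives in the subspace selected by a fixed value of one designated qubit, so the $k$-th level of the $d$-level transform reduces to a single-level QWT acting on an $(n-k+1)$-qubit subregister, for $k=1,\dots,d$, while the already-produced high-frequency qubits are left untouched.

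First I would write the $d$-level QWT as a product of $d$ single-level QWTs acting on nested subregisters of sizes $n, n-1, \ldots, n-d+1$, each one applied as identity on the qubits that already carry earlier high-frequency coefficients. Then, for each such single-level QWT I would invoke Theorem~\ref{theorem:1QWT}: the transform at level $k$ requires $\ceil{\log_2 M}+1$ ancilla qubits and $\mathcal{O}(n-k+1)+\mathcal{O}(M^{3/2})$ \textsc{toffoli} and one- and two-qubit gates. Summing gate counts over $k=1,\ldots,d$ gives
\begin{equation*}
\sum_{k=1}^{d}\bigl(\mathcal{O}(n-k+1)+\mathcal{O}(M^{3/2})\bigr)=\mathcal{O}(dn)+\mathcal{O}(dM^{3/2}),
\end{equation*}
which matches the claimed gate cost.

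For the ancilla count, the crucial observation is that the deterministic single-level QWT produced by Theorem~\ref{theorem:1QWT} returns all $\ceil{\log_2 M}+1$ of its ancillas exactly to $\ket{0}$ at the end of the circuit, so the very same ancilla register can be reused in every one of the $d$ levels instead of allocating $d(\ceil{\log_2 M}+1)$ fresh qubits. The single additional ancilla (giving the stated total of $\ceil{\log_2 M}+2$) I would allocate for lightweight bookkeeping needed to route each subsequent single-level QWT to the correct low-frequency subregister, for instance as the control qubit of a small shift/relabelling subcircuit that costs only $\mathcal{O}(1)$ additional gates per level and is therefore absorbed into the $\mathcal{O}(dn)$ term.

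The main obstacle will be to justify cleanly that the ancillas released at the end of each level are in a genuine product $\ket{0}$ state, unentangled with the data, so that reuse is legitimate; this is where I would lean on the fact that Theorem~\ref{theorem:1QWT} delivers an \emph{exact} single-level QWT via amplitude amplification, not an approximate one, so the uncomputation is perfect. Once this is recorded, the rest of the argument is bookkeeping: multiplying the single-level bounds by $d$, verifying that restriction to the low-frequency subregister is faithful to the recursive definition of the multi-level wavelet transform, and absorbing the constant-qubit routing overhead into the stated $\ceil{\log_2 M}+2$ ancilla count.
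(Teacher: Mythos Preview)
There is a genuine gap in your plan: you have implicitly described the \emph{packet} QWT, not the multi-level QWT. After a single-level QWT on $n$ qubits, the low- and high-frequency parts are not stored in disjoint sets of qubits; they are distinguished by the \emph{value} of the most significant qubit (low frequency lives in the $\ket{0}$ sector of that qubit, high frequency in the $\ket{1}$ sector). Consequently, applying an uncontrolled single-level QWT to the remaining $(n-1)$-qubit subregister acts on \emph{both} sectors simultaneously---that is exactly $\id_1\otimes W_{n-1}$, the second level of the packet transform in Eq.~\eqref{eq:pQWT_recursive}. The multi-level transform instead requires $W_{n-1}\oplus\id_{n-1}=\Lambda_0^1(W_{n-1})$, and more generally at level $r{+}1$ the operator $\Lambda_0^r(W_{n-r})$, i.e.\ a single-level QWT \emph{controlled} on the top $r$ qubits all being $\ket{0}$ (Eq.~\eqref{eq:mQWT_decomp}). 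No $\mathcal{O}(1)$ ``shift/relabelling'' of qubits can convert an uncontrolled action on a subregister into such a controlled action, so the bookkeeping ancilla you propose does not do the job you assign it.

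What the paper actually does is implement each $\Lambda_0^r(W_{n-r})$ by compressing the $r$ control qubits into a single ancilla via two $(r{+}1)$-bit Toffolis (cost $\mathcal{O}(r)$ by Lemma~\ref{lemma:multiToffoli}) and then invoking a \emph{singly-controlled} single-level QWT, whose cost is shown separately (Lemma~\ref{lemma:c1QWT}) to remain $\mathcal{O}(n)+\mathcal{O}(M^{3/2})$ with $\ceil{\log_2 M}+2$ ancillas. That extra ancilla---the ``$+2$'' in the theorem---is precisely the target of these Toffolis (reused across levels), not a routing qubit. To repair your argument you therefore need two additional ingredients: the Toffoli-based reduction of the multi-controlled QWT to a single-controlled one, and a bound showing that adding one control to the single-level QWT circuit preserves the $\mathcal{O}(n)+\mathcal{O}(M^{3/2})$ gate count.
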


\begin{theorem}[Packet QWT]
\label{theorem:pQWT}
    A d-level packet QWT on $n$ qubits
    can be achieved using $\ceil{\log_2M}+1$ ancilla qubits and 
    $\mathcal{O}(dn-d^2/2)+\mathcal{O}(dM^{3/2})$ \textup{\textsc{toffoli}} and elementary one- and two-qubit gates.
\end{theorem}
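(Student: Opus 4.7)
The plan is to reduce a $d$-level packet QWT to a sequence of $d$ uncontrolled single-level QWTs acting on registers of decreasing size, and then to invoke Theorem~\ref{theorem:1QWT} for each one. First I would unfold the packet recursion: after the first $k-1$ levels, the $n$-qubit register is partitioned into $2^{k-1}$ sub-blocks indexed by the computational-basis values of a designated $(k-1)$-qubit ``label'' subregister, and the $k$-th level of the packet transform applies the same single-level kernel to \emph{every} such sub-block. Because the sub-blocks are indexed precisely by computational-basis states of that subregister, applying a fixed unitary to each of them is equivalent to applying that unitary, uncontrolled, on the remaining $n-k+1$ ``data'' qubits; equivalently, the level-$k$ action has the tensor-product form $\id\otimes U^{(n-k+1)}$ on the natural bipartition of the register. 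Thus the $k$-th level reduces to a single-level QWT on $n-k+1$ qubits.

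Next I would apply Theorem~\ref{theorem:1QWT} to each level. The $k$-th level contributes $\mathcal{O}(n-k+1)+\mathcal{O}(M^{3/2})$ Toffoli and one- and two-qubit gates and uses $\ceil{\log_2 M}+1$ ancilla qubits. Summing over $k=1,\dots,d$ yields
\[
\sum_{k=1}^{d}\bigl(\mathcal{O}(n-k+1)+\mathcal{O}(M^{3/2})\bigr)=\mathcal{O}\bigl(dn-d(d-1)/2\bigr)+\mathcal{O}(dM^{3/2}),
\]
which matches the claimed $\mathcal{O}(dn-d^2/2)+\mathcal{O}(dM^{3/2})$ bound after absorbing the subleading $\mathcal{O}(d)$ term. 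For the ancilla count, I would observe that the $d$ levels are executed sequentially and each single-level QWT returns its ancilla register to $\ket{0}$ before the next begins, so the same $\ceil{\log_2 M}+1$ ancilla qubits can be reused across all levels. In contrast to the multi-level setting of Theorem~\ref{theorem:dQWT}, no additional ancilla is needed to condition the recursion on the low-frequency subspace, because the packet transform acts uniformly on both low- and high-frequency branches at every step; this is precisely why the packet ancilla count is one smaller.

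The main obstacle I anticipate is justifying rigorously the reduction ``apply the same kernel inside every sub-block indexed by the label register'' $=$ ``apply the kernel uncontrolled on the data register''. This is ultimately a tensor-product identity for the level-$k$ packet kernel, but it rests on the packet transform using a single wavelet throughout and on careful bookkeeping of how the single-level QWT arranges its low- and high-frequency outputs along a designated qubit at each level. Once this identity is written out explicitly, the remaining ingredients---repeatedly invoking Theorem~\ref{theorem:1QWT}, the arithmetic sum above, and the ancilla reuse argument---are routine.
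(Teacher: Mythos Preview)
Your proposal is correct and follows essentially the same route as the paper: the paper first records the recursion $P^{(d)}_n=(\id_1\otimes P^{(d-1)}_{n-1})W_n$ and unfolds it into $P^{(d)}_n=(\id_{d-1}\otimes W_{n-d+1})\cdots(\id_1\otimes W_{n-1})W_n$, then invokes Theorem~\ref{theorem:1QWT} level by level, sums to $\mathcal{O}(dn-d(d-1)/2)+\mathcal{O}(dM^{3/2})$, and notes that the ancillae are reused and that no extra control ancilla is needed. The tensor-product identity you flag as the ``main obstacle'' is exactly this one-line recursion, so there is no real gap.
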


We remark that the number of levels for the multi-level or packet QWTs is upper bounded by~$n$, i.e., $d\leq n$.
Hence, as a corollary of Theorems~\ref{theorem:dQWT} and~\ref{theorem:pQWT} , the gate cost for these transformations is at most quadratic in $n=\log_2N$.
We show that the gate costs reported in the above theorems are independent of $M$ for practical applications and only a few number of ancilla qubits suffice to implement the multi-level and packet QWTs.
We discuss allowable range for the order parameter~$M$ versus the values used in practical applications in the discussion section.

The rest of this paper proceeds as follows. 
We begin by describing the notation we use throughout the paper.
Then we detail our approach for implementing a single-level QWT by simple modular arithmetic operations in~\S\ref{sec:1QWT}.
We describe the multi-level and packet QWT in~\S\ref{sec:dQWT}, followed by detailed complexity analysis for our algorithms in~\S\ref{sec:complexity}.
Finally, we discuss our results and conclude in~\S\ref{sec:conclusion}.

\textbf{Notation}:
We refer to $A\in \mathbb{C}^{2^n\times 2^n}$ as $n$-qubit matrix and denote the $n$-qubit identity by $\mathbbm{1}_n$.
Throughout the paper, we use the symbol~$M$ for the wavelet order and $m=\ceil{\log_2M}$.
The wavelet order is an even positive number as $M=2\mathcal{K}$ with $\mathcal{K}$ a positive integer called the wavelet index;
the symbol $\mathcal{K}$ is used for $M/2$.
We use zero indexing for iterable mathematical objects such as vectors and matrices.
Qubits of an $n$-qubit register is ordered from right to left, i.e., the rightmost~(leftmost) qubit in $\ket{q_{n-1},\ldots,q_1,q_0}$ representing the state of an $n$-qubit register that encodes the binary representation of an integer $q$ is the first~(last) qubit.
The first and last qubits are also referred to as the least-significant bit~(LSB) and the most-significant bit~(MSB).
Qubits in a quantum circuit are ordered from bottom to top: the bottom qubit is the LSB and the top qubit is the MSB.

\section{Single-level QWT}
\label{sec:1QWT}
This section describes our algorithm for executing a single-level wavelet transform on a quantum computer.
Such a transformation is specified by a kernel matrix.
We describe this matrix in \S\ref{subsec:kernelMtx} and decompose it as a linear combination of a finite number of unitaries.
The decomposition enables a prepare-select-unprepare-style procedure for probabilistic implementation of the desired transformation that we cover in \S\ref{subsec:LCU}.
In \S\ref{subsec:success}, we describe how purposefully reducing the success probability yields a perfect amplitude amplification.
Finally, in \S\ref{subsec:select} and \S\ref{subsec:prep}, we provide a compilation for the select and prepare operations based on simple-to-implement modular arithmetic operations.
\begin{figure*}
\centering
    \includegraphics[width=\linewidth]{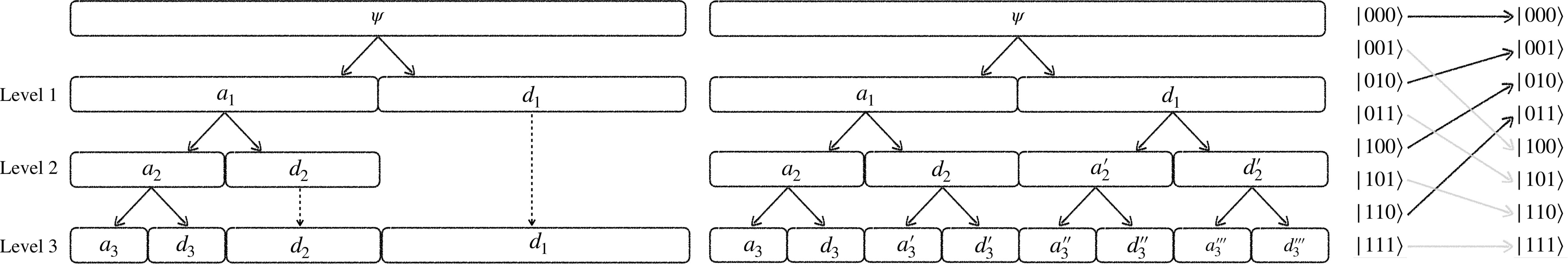}
    \caption{
    Visualization for (Left)~multi-level wavelet transform,
    (Middle) the packet wavelet transform, and
    (Right)~action of quantum perfect shuffle transform in Eq.~\eqref{eq:shuffle} on three-qubit basis states.
    Three-level wavelet transforms are shown for simplicity.
    In the first level, the size-$N$ vector $\psi$ is partitioned into two size-$N/2$ vectors:
    an average vector~$a_1=H\psi$ and a difference vector~$d_1=G\psi$ with~$H$ and $G$ defined in Eq.~\eqref{eq:HandG}.
    For the quantum (packet) wavelet transform, the components of~$\psi$ are amplitudes of a quantum state.
    The wavelet transform is recursively applied to the average vector in the multi-level wavelet transform. In contrast, the packet transform applies the wavelet transform to both the average and difference vectors.
    }
    \label{fig:visQWTs}
\end{figure*}
\subsection{The wavelet kernel matrix as a linear combination of unitaries}
\label{subsec:kernelMtx}

We begin this subsection by briefly describing the kernel matrix associated with a wavelet transform.
We refer to~\cite[Chap.~2.1]{mbm} for a review of wavelet formalism and how this matrix is constructed.
The kernel matrix $W$ of a wavelet transform is specified by the wavelet filter coefficients:
a sequence of numbers $(h_0,h_1,\ldots,h_M)$ that depend on the type of wavelet and satisfy 
\begin{equation}
\label{eq:conds}
    \sum_{\ell=0}^{M-1} h_\ell = \sqrt{2},
    \quad \sum_{\ell=0}^{M-1} h_\ell^2 = 1,
\end{equation}
where the even number $M$ is the wavelet order.
Specifically, the $2^n\times 2^n$ kernel matrix~$W$ is comprised of $2^{n-1}\times 2^n$ matrices $H$ and $G$ as
\begin{equation}
\label{eq:HandG}
   W =\begin{bmatrix} H\\G \end{bmatrix},
   \quad
   H_{ij}=
   h_{j-2i\, (\bmod 2^n)},
   \quad
   G_{ij}=
   g_{j-2i\, (\bmod 2^n)},
   \quad
   g_\ell = (-)^\ell h_{M-1-\ell}
   ;
\end{equation}
an example of the kernel matrix $W$ for a forth-order wavelet $(M=4)$ is as follows

\begin{equation}
\label{eq:WTM}
   W = 
   {\small
   \begin{bmatrix}
      h_0  &   h_1  &   h_2  &   h_3  &    0   &    0   & \cdots & 0 & 0 & 0 & 0 \\
       0   &    0   &   h_0  &   h_1  &   h_2  &   h_3  & \cdots & 0 & 0 & 0 & 0\\
       0   &    0   &    0   &    0   &   h_0  &   h_1  & \cdots & 0 & 0 & 0 & 0\\
    \vdots & \vdots & \vdots & \vdots & \vdots & \vdots & \ddots & \vdots & \vdots & \vdots & \vdots\\
      0    &   0    &    0   &    0   &    0   &    0   & \cdots & h_0 & h_1 & h_2 & h_3\\
      h_2  &   h_3  &    0   &    0   &    0   &    0   & \cdots & 0 & 0 & h_0 & h_1\\
      h_3  &  -h_2  &   h_1  &  -h_0  &    0   &    0   & \cdots & 0 & 0 & 0 & 0\\
       0   &    0   &   h_3  &  -h_2  &    h_1 &   -h_0 & \cdots & 0 & 0 & 0 & 0\\
       0   &    0   &   0    &  0     &    h_3 &   -h_2 & \cdots & 0 & 0 & 0 & 0\\
    \vdots & \vdots & \vdots & \vdots & \vdots & \vdots & \ddots &\vdots & \vdots & \vdots & \vdots\\
      0    &   0    &    0   &    0   &    0   &    0   & \cdots & h_3 & -h_2 & h_1 & -h_0\\
      h_1  &  -h_0  &   0    &    0   &    0   &   0    & \cdots & 0 & 0 & h_3  & -h_2\\
   \end{bmatrix}
   },\;\,
   U = 
   {\small   
   \begin{bmatrix}
      h_0  &   h_1  &   h_2 &   h_3 &    0   &    0     &\cdots & 0     & 0     & 0     & 0     \\
       0   &    0   &   h_0 &   h_1 &   h_2  &   h_3    &\cdots & 0     & 0     & 0     & 0     \\
       0   &    0   &    0  &    0  &   h_0  &   h_1    &\cdots & 0     & 0     & 0     & 0     \\
    \vdots & \vdots & \vdots& \vdots& \vdots & \vdots   &\ddots &\vdots &\vdots &\vdots &\vdots \\
       0   &    0   &    0  &    0  &    0   &    0     &\cdots & h_0   & h_1   & h_2   & h_3   \\
      h_2  &   h_3  &    0  &    0  &    0   &    0     &\cdots & 0     & 0     & h_0   & h_1   \\
      h_1  &  -h_0  &    0  &    0  & \cdots &    0     & 0     & 0     & 0     & h_3   & -h_2  \\
      h_3  &  -h_2  &   h_1 &  -h_0 & \cdots &    0     & 0     & 0     & 0     & 0     & 0     \\
    \vdots &\vdots  &\vdots &\vdots &\ddots  & \vdots   &\vdots &\vdots &\vdots &\vdots &\vdots \\
        0  &  0     &   0   &  0    & \cdots & h_1      & -h_0  & 0     & 0     & 0     & 0     \\
        0  &  0     &   0   &  0    & \cdots & h_3      & -h_2  & h_1   & -h_0  & 0     & 0     \\
        0  &  0     &   0   &  0    & \cdots & 0        &  0    & h_3   & -h_2  & h_1   & -h_0  \\
   \end{bmatrix}
   }
\end{equation}
The unitary matrix $U$ here is a modification of the unitary $W$ that we use for decomposing $W$ as a linear combination of unitaries.
To this end, let us first define the circular downshift and upshift permutation operations as
\begin{equation}
\label{eq:shift}
    S_n^{\downarrow} :=
    \sum_{j=0}^{2^n-1} \ketbra{j+1\bmod 2^n}{j}
    =
    \begin{bmatrix}
          &  &          &  &1\\
        1 &  &          &  & \\
          &1 &          &  & \\
          &  &  \ddots  &  & \\
          &  &          &1 &
    \end{bmatrix},
    \quad
    S_n^{\uparrow} := 
    \sum_{j=0}^{2^n-1} \ketbra{j-1\bmod 2^n}{j}
    =
    \begin{bmatrix}
          &1 &  &       & \\
          &  &1 &       & \\
          &  &  &\ddots & \\
          &  &  &       &1\\
        1 &  &  &       &
    \end{bmatrix},
\end{equation}
where the matrix size is $2^n\times 2^n$.
Note that these operations are inverse of each other and their action on $n$-qubit basis state $\ket{j}$ is
\begin{equation}
    S^\downarrow_n\ket{j} = \ket{j+1 \bmod 2^n},
    \quad
    S^\uparrow_n\ket{j} = \ket{j-1 \bmod 2^n}.
\end{equation}
Upon acting on a vector with $2^n$ components, $S_n^\downarrow/S_n^\uparrow$ shifts the vector's components one place downward/upward with wraparound.
Similarly, when acting on a matrix with $2^n$ rows from the left side, $S_n^\downarrow/S_n^\uparrow$ shifts the rows of the matrix one place downward/upward with wraparound.

To construct an LCU decomposition for the $n$-qubit unitary $W$, the kernel matrix associated with a wavelet of order $M=2\mathcal{K}$, first we transform it into another unitary $U$ by $\mathcal{K}-1$ downshift permutations of the rows in the lower half of $W$.
Specifically, we transform $W$ as
\begin{equation}
   W =\begin{bmatrix} H\\G \end{bmatrix} \to U =\begin{bmatrix} H\\G' \end{bmatrix},
   \quad
   G':= (S_{n-1}^{\downarrow})^{\mathcal{K}-1}G,
\end{equation}
where $G'$ is obtained by $\mathcal{K}-1$ downshift permutations of the rows of $G$ and its elements are
\begin{equation}
    G'_{i,j} = (-1)^{2i+2-j}h_{2i+2-j}.
\end{equation}
Let us now represent $\mathcal{K}-1$ upnshift permutations on $n$ qubits by $\textsc{ushift}_n$ with the action
\begin{equation}
\label{eq:ushift}
    \textsc{ushift}_n \ket{j}
    := \ket{j+\mathcal{K}-1\bmod 2^n}
\end{equation}
on $n$-qubit basis state $\ket{j}$.
Then we have
\begin{equation}
\label{eq:WU}
    W= (\ketbra{0}{0}\otimes\id_{n-1}+\ketbra{1}{1}\otimes \textsc{ushift}_{n-1}) U
    := \Lambda_1(\textsc{ushift}_{n-1}) U,
\end{equation}
i.e., $W$ is obtained by $\mathcal{K}-1$ upshift permutations of the rows in the lower half of $U$.

We now decompose the unitary $U$ as a linear combination of $M$ unitaries as
\begin{equation}
\label{eq:LCU}
    U = \sum_{\ell=0}^{M-1} h_\ell U_\ell,
    \quad
    U_\ell :=
    \begin{cases}
        P_\ell
        \quad\quad\quad\quad\quad\quad
        \text{if $\ell$ is odd},\\
        (Z\otimes\id_{n-1})P_\ell
        \quad
        \text{if $\ell$ is even},
    \end{cases}
\end{equation}
where $Z$ is the Pauli-$Z$ operator and 
the unitary 
\begin{equation}
\label{eq:P_ell}
    P_\ell :=
    \sum_{j=0}^{N/2-1}
    \ketbra{j}{2j+\ell \bmod\! N}
    +\ketbra{N/2+j}{2j+1-\ell \bmod\! N}
\end{equation}
is a permutation matrix that is obtained from $U$ as follows:
all entries of $U$ with value $\pm h_\ell$ are replaced with $1$ and all other nonzero entries are replaced with $0$.
Because $W$ is unitarily equivalent to $U$ by Eq.~\eqref{eq:WU}, the LCU decomposition in Eq.~\eqref{eq:LCU} provides a similar LCU decomposition for $W$.  

\subsection{Probabilistic implementation for the single-level QWT}
\label{subsec:LCU}

The decomposition in Eq.~\eqref{eq:LCU} enables a prepare-select-unprepare-style method~\cite{CW12} for probabilistic implementation of $U$.
To this end, let
\begin{equation}
\label{eq:prep}
    \textsc{prep} \ket{0^m}
    := \frac{1}{\sqrt{h}} \sum_{\ell=0}^{M-1} \sqrt{\abs{h_\ell}} \ket{\ell},
    \quad
    \textsc{unprep}^\dagger \ket{0^m}
    := \frac{1}{\sqrt{h}} \sum_{\ell=0}^{M-1} \text{sign}(h_\ell)\sqrt{\abs{h_\ell}} \ket{\ell},
    \quad
    h:=\sum_{\ell=0}^{M-1} \abs{h_\ell},
\end{equation}
where $m=\ceil{\log_2M}$ is the number of ancilla qubits, and let \textsc{select} be an operation such that
\begin{equation}
\label{eq:select}
    \textsc{select} \ket{\ell}\ket{j}
    := \ket{\ell} U_\ell \ket{j}
\end{equation}
with $U_\ell$ defined in Eq.~\eqref{eq:LCU}. 
Then for any $n$-qubit state we have
\begin{equation}
\label{eq:prep_select_prep}
    (\textsc{unprep}\otimes\id_n)\,
    \textsc{select}\,
    (\textsc{prep}\otimes\id_n)
    \ket{0^m}\ket{\psi}\\
    = \frac{1}{h}\ket{0^m} U\ket{\psi}
    +\sqrt{1-\frac{1}{h^2}}\ket{\perp},
\end{equation}
where $\ket{\perp}$ is an ($m+n$)-qubit state such that $(\bra{0^m}\otimes\id_n)\ket{\perp}=0$.
This equation follows as
\begin{align}
    \ket{0^m}\ket{\psi}
    &\xrightarrow{\;\;\textsc{prep}\,\otimes \id_n\;\;}\frac{1}{\sqrt{h}}\sum_\ell \sqrt{\abs{h_\ell}} \ket{\ell} \ket{\psi}\\
    &\xrightarrow{\quad\textsc{select}\quad}
    \frac{1}{\sqrt{h}}\sum_\ell \sqrt{\abs{h_\ell}} \ket{\ell} U_\ell\ket{\psi}\\
    &\xrightarrow{\textsc{unprep}\,\otimes \id_n}
    \frac{1}{h}\ket{0^m} U\ket{\psi}
    +\sqrt{1-\frac{1}{h^2}}\ket{\perp},
\end{align}
where the last line follows by projecting the ancilla qubits to $\ket{0^m}$ state, i.e.,
\begin{align}
    (\bra{0^m}\otimes \id_n)(\textsc{unprep}\otimes\id_n)
    \frac{1}{\sqrt{h}}\sum_\ell \sqrt{\abs{h_\ell}} \ket{\ell} U_\ell\ket{\psi}
    &=
    \frac{1}{h} \sum_{\ell'\ell}
    \text{sign}(h_{\ell'}) \sqrt{\abs{h_{\ell'} h_\ell}} \braket{\ell'|\ell} U_\ell \ket{\psi}\\
    &= \frac{1}{h} \sum_\ell
    h_\ell U_\ell \ket{\psi}=\frac{1}{h}U\ket{\psi}.
\end{align}
Equation~\eqref{eq:prep_select_prep} yields a probabilistic implementation for $U$.
Because $U$ and $W$ are unitarily equivalent, by Eq.~\eqref{eq:WU}, we also have a probabilistic implementation for $W$ with the same success probability.
In particular, let us define a probabilistic QWT as
\begin{equation}
\label{eq:pqwt}
    \textsc{pqwt}
    := (\id_m\otimes\Lambda_1(\textsc{ushift}_{n-1})) (\textsc{unprep}\otimes\id_n)\,
    \textsc{select}\,
    (\textsc{prep}\otimes\id_n),
\end{equation}
then we have
\begin{equation}
\label{eq:pqwt_action}
    \textsc{pqwt} \ket{0^m}\ket{\psi}
    = \sin(\alpha)\ket{0^m} W\ket{\psi}
    +\cos(\alpha)\ket{\perp'},
    \quad
    \sin(\alpha):=1/h,
\end{equation}
with the $(m+n)$-qubit state $\ket{\perp'}:= \id_m\otimes\Lambda_1(\textsc{ushift}_{n-1}) \ket{\perp}$ and the $\ket{1}$-controlled unitary $\Lambda_1(\textsc{ushift})$ defined in Eq.~\eqref{eq:WU}.
The success amplitude of this approach is known and its value is $1/h$. As shown in FIG.~\ref{fig:succAmp}(Left), the success amplitude is greater than $1/4$ for a wide range of wavelet order.

We now present an alternative approach for a probabilistic implementation of the single-level QWT. The state-preparation of this approach is simpler and could be preferred in practical applications.
Instead of preparing the state with square-root coefficients by \textsc{prep} in Eq.~\eqref{eq:prep}, in this approach we use the operation \textsc{linprep} defined as
\begin{equation}
\label{eq:linprep}
    \textsc{linprep} \ket{0^m}
    := \sum_{\ell=0}^{M-1} h_\ell \ket{\ell},
\end{equation}
which prepares the state with linear coefficients.
For any $n$-qubit state $\ket{\psi}$ we then have
\begin{equation}
\label{eq:linprep_select_prep}
    (H^{\otimes m}\otimes\id_n)\,
    \textsc{select}\,
    (\textsc{linprep}\otimes\id_n)
    \ket{0^m}\ket{\psi}\\
    = \sin(\alpha)\ket{0^m} U\ket{\psi}
    +\cos(\alpha)\ket{\perp},
    \quad
    \sin(\alpha) := 1/\sqrt{M},
\end{equation}
where $\ket{\perp}$ and \textsc{select} are as those in Eq.~\eqref{eq:prep_select_prep}.
This equation follows as
\begin{equation}
      \ket{0^m}\ket{\psi}
    \xrightarrow{\textsc{linprep}\,\otimes \id_n}
    \sum_\ell h_\ell \ket{\ell} \ket{\psi}
    \xrightarrow{\;\textsc{select}\;}
    \sum_\ell h_\ell \ket{\ell} U_\ell\ket{\psi}
    \xrightarrow{\,H^{\otimes m}\otimes \id_n\,}
    \sin(\alpha)\ket{0^m} U\ket{\psi} + \cos(\alpha)\ket{\perp},  
\end{equation}
where the last step is obtained by projecting the ancilla qubits to $\ket{0^m}$ state, i.e.,
\begin{equation}
    (\bra{0^m}\otimes \id_n)(H^{\otimes m} \otimes\id_n)
    \sum_\ell h_\ell \ket{\ell} U_\ell\ket{\psi}
    =
    \frac{1}{\sqrt{M}} \sum_{\ell'\ell}
    h_\ell \braket{\ell'|\ell} U_\ell \ket{\psi}
    = \frac{1}{\sqrt{M}}\sum_\ell
    h_\ell U_\ell \ket{\psi}=
    \sin(\alpha) U\ket{\psi}.
\end{equation}
The success amplitude of this approach is $1/\sqrt{M}$.
As shown in FIG.~\ref{fig:succAmp}(Right), the magnitude of wavelet coefficients $h_\ell$ with high index~$\ell$ are negligibly small.
Consequently, the success amplitude becomes effectively independent of $M$ for practical~applications.

\begin{figure*}
\centering
    \includegraphics[width=.98\linewidth]{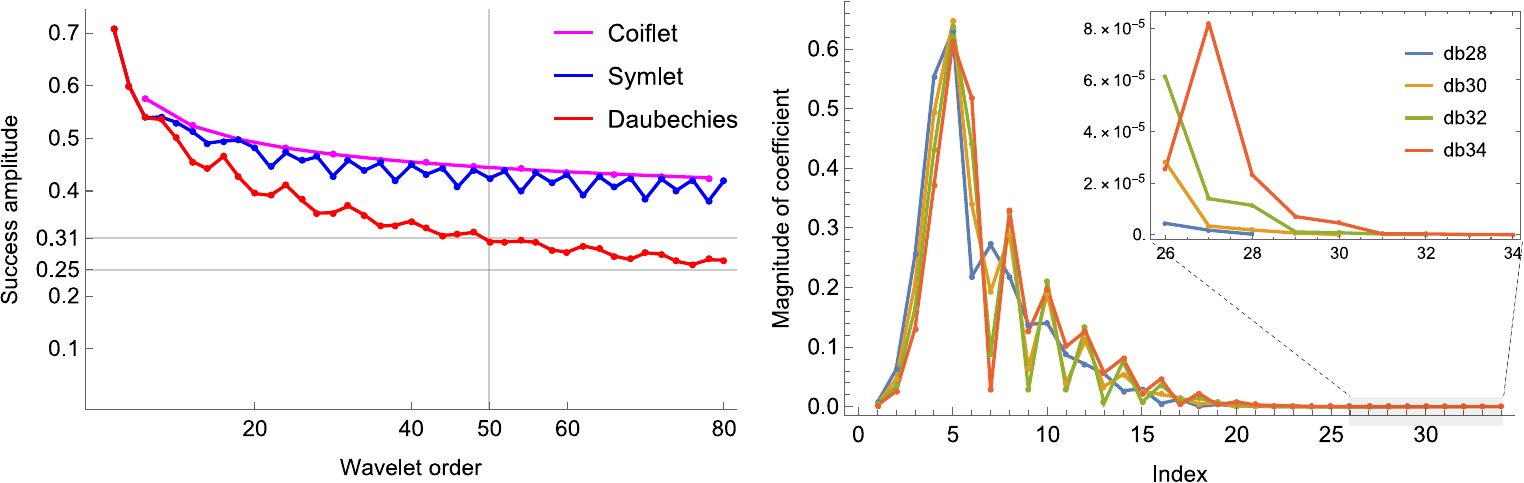}
    \caption{
    (Left)~The success amplitude of the probabilistic implementation in Eq.~\eqref{eq:pqwt_action} for a single-level QWT for commonly used wavelets.
    The success amplitude is known and is greater than $1/4$ for a wide range of wavelet order; it is greater than $0.31$ for the range of wavelet order used in practical~applications.
    For perfect amplitude amplification, the former~(latter) value needs three~(two) rounds of amplitude amplification.
    (Right)~The magnitude of wavelet coefficient $\abs{h_\ell}$ as a function of the index $\ell$ for the Debauchees wavelet with order $M=28,30,32,34$.
    The zoomed-in part shows the wavelet coefficients with higher indexes are negligibly small, and the number of small coefficients increases by increasing the wavelet order.
    The magnitude of wavelet coefficients for other wavelets has a similar pattern. 
    }
    \label{fig:succAmp}
\end{figure*}
\subsection{Reduction of success amplitude for perfect amplitude amplification}
\label{subsec:success}

The success amplitude of the described probabilistic approaches for implementing the single-level QWT is a known constant value.
For perfect amplitude amplification, we purposefully reduce the success amplitude using one extra ancilla qubit.
This end is achieved by applying a rotation gate on the extra qubit initialized in $\ket{0}$.
A few rounds of amplitude amplification then yields the success state with unit probability.

The success amplitude of the probabilistic implementation \textsc{pqwt} in Eq.~\eqref{eq:pqwt_action} is $\sin\alpha$.
This amplitude is known and has a value greater than $1/4$ as discussed in \S\ref{subsec:LCU}.
Let $\theta<\alpha$ be the angle defined in the equation below and let $R(\theta)$ be the rotation gate defined as $R(\theta)\ket{0}:=\cos\theta\ket{0}
+\sin\theta\ket{1}$, then
by Eq.~\eqref{eq:pqwt_action} and for any $n$-qubit state $\ket{\psi}$ we have
\begin{equation}
    (R(\theta)\otimes\textsc{pqwt}) \ket{0}\ket{0^m}\ket{\psi}
    =\sin(\pi/14)\ket{0^{m+1}}W\ket{\psi}
    +\cos(\pi/14)\ket{\perp''},
    \quad \cos\theta:=\sin(\pi/14)/\sin\alpha,
\end{equation}
where $\ket{\perp''}$ is an $(m+n+1)$-qubit state that satisfies
$(\bra{0^{m+1}}\otimes\id_n)\ket{\perp''}=0$.
The success amplitude is now $\sin(\pi/14)$, enabling perfect amplitude amplification.
Indeed, by only three rounds of amplitude amplification,
$W$ is applied on~$\ket{\psi}$ and all $m+1$ ancilla qubits end up in the all-zero state.

We remark that the success amplitude is grater than~$0.31$ for the range of wavelet order used
in practical applications; see FIG.~\ref{fig:succAmp}(Left).
In this case, we reduce the success amplitude to $\sin(\pi/10)<0.31$ by setting $\cos\theta=\sin(\pi/10)/\sin\alpha$ and achieve the perfect amplitude amplification by only two rounds of amplitude amplification.

We use the oblivious amplitude amplification because the input state $\ket{\psi}$ is unknown.
To this end, let $R_n = 2\ketbra{0^n}{0^n}-\id_n$ be the $n$-qubit reflection operator with respect to the $n$-qubit zero state $\ket{0^n}$ and let
\begin{equation}
\label{eq:AAoprator}
    \mathcal{A}:=
    -(R(\theta)\otimes\textsc{pqwt})
    (R_{m+1}\otimes \id_n)
    (R(\theta)\otimes\textsc{pqwt})^\dagger
    (R_{m+1}\otimes \id_n),
\end{equation}
be the amplitude amplification operator.
Then the following holds~\cite[Lemma~2.2]{kothari}
\begin{equation}
\label{eq:OAA}
    \mathcal{A}^t (R(\theta)\otimes\textsc{pqwt})
    \ket{0}\ket{0^m}\ket{\psi}
    = \sin((2t+1)\pi/14)
    \ket{0^{m+1}}W\ket{\psi}
    +\cos((2t+1)\pi/14) \ket{\perp''}.
\end{equation}
Therefore, the unit success probability is achieved by three executions of amplitude amplification~($t=3$).

The success amplitude of the second approach based on Eq.~\eqref{eq:linprep_select_prep} is $\sin\alpha=1/\sqrt{M}$.
In this case, we reduce the success amplitude to $\sin(\pi/2(2t+1))$ by applying the rotation gate $R(\theta)$ on the extra qubit initialized in $\ket{0}$ state, with $t$ and $\theta$ defined as
\begin{equation}
    t:= \ceil{\frac{1}{2}\left(\frac{\pi}{2\alpha}-1\right)},
    \quad
    \cos\theta := \frac{\sin(\pi/2(2t+1))}{\sin\alpha} = \sqrt{M} \sin\left(\frac{\pi}{2}\frac{1}{2t+1}\right).
\end{equation}
Then we achieve the desired state with unit success probability by $t$ rounds of amplitude amplification, i.e.,
$W$ is applied on~$\ket{\psi}$ and all $m+1$ ancilla qubits end up in the all-zero state.

\subsection{Implementing \textsc{select} by modular quantum arithmetic}
\label{subsec:select}

Here we describe our approach for implementing the \textsc{select} operation by simple modular arithmetic operations on a quantum computer.
As per Eq.~\eqref{eq:select}, \textsc{select} applies $U_\ell$ on the second register $\ket{j}$ based on the value of $\ell$ encoded in the first register~$\ket{\ell}$.
If $\ell$ is odd, then $U_\ell = P_\ell$ by Eq.~\eqref{eq:LCU}.
Otherwise, $U_\ell$ is a product of $P_\ell$ and a single Pauli-$Z$ on the first qubit of the second register.
That is to say that $U_\ell$ and $P_\ell$ are equivalent up to a $\ket{0}$-controlled-$Z$ operation; control qubit is the qubit representing the least-significant bit (LSB) of $\ell$ and target qubit is the one representing the most-significant bit (MSB) of $j$.   
Implementing \textsc{select} is therefore achieved by an implementation for $P_\ell$.

The $n$-qubit permutation $P_\ell$ in Eq.~\eqref{eq:P_ell} transforms the $n$-qubit basis state $\ket{j}$ as
\begin{equation}
\label{eq:P_ell_action}
    P_\ell: \ket{j}\mapsto
    \begin{cases}
        \Ket{\frac{j-\ell \bmod\! N}{2}}
            \quad \quad\quad\quad\,
            \text{if $j$ and $\ell$ have same parity,} \vspace{1mm} \\   
        \Ket{\frac{N}{2}+\frac{j+\ell-1 \bmod\! N}{2}}   
            \quad
            \text{if $j$ and $\ell$ have opposite parity.}
    \end{cases}
\end{equation}
This transformation can be implemented by modular quantum addition \textsc{add} and subtraction \textsc{sub} defined as
\begin{equation}
\label{eq:add}
    \textsc{add}\ket{\ell}\ket{j} := \ket{\ell} \ket{j+\ell \bmod\! N},
    \quad
    \textsc{sub}\ket{\ell}\ket{j} := \ket{\ell} \ket{j-\ell \bmod\! N},
\end{equation}
and by the quantum perfect shuffle transformation defined as
\begin{equation}
\label{eq:shuffle}
    \textsc{shuffle}
    \ket{q_{n-1}\ldots q_1q_0}
    := \ket{q_0q_{n-1}\ldots q_1},
\end{equation}
which performs the transformation $\ket{q}\mapsto\ket{q/2}$ if $q$ is an even number and $\ket{q}\mapsto\ket{N/2+(q-1)/2}$ if $q$ is odd (see FIG.~\ref{fig:visQWTs}). For clarity, we remark that here and in the following $\ket{\ell}$ is an $m$-qubit basis state with $m<n$ and $\ket{j}$ is an $n$-qubit basis state.

To implement $P_\ell$ by these operations, we use a single ancilla qubit called parity qubit and define the parity operation \textsc{par} as 
\begin{equation}
\label{eq:par}
    \textsc{par}\ket{0}\ket{\ell}\ket{j}:=
    \begin{cases}
        \ket{0}\ket{\ell}\ket{j}
            \quad \text{if $j$ and $\ell$ have same parity,} \vspace{1mm} \\   
        \ket{1}\ket{\ell}\ket{j}
            \quad \text{if $j$ and $\ell$ have opposite parity,}
    \end{cases}
\end{equation}
which flips the parity qubit based on the parity of $\ell$ and $j$;
parity of a number is $0$ if its even and is $1$ otherwise.
This operation can be implemented using two \textsc{cnot} gates, one controlled on the LSQ of the register encoding $\ell$ and the other controlled on the LSQ of the register encoding $j$.
The target qubit for each \textsc{cnot} is the parity qubit.

Having computed the parity by \textsc{par}, we then apply \textsc{sub} to the last two registers if the parity qubit is $\ket{0}$ and apply \textsc{add} to these registers if the parity is $\ket{1}$, followed by the shuffle operation in Eq.~\eqref{eq:shuffle} on the last register.
By these operations, the state of the parity qubit, the $m$-qubit register encoding $\ell$, and the $n$-qubit register encoding $j$ transform as
\begin{align}
    \textsc{par}\ket{0}\ket{\ell}\ket{j}
    &\xrightarrow{\ketbra{0}{0}\otimes\, \textsc{sub} + \ketbra{1}{1}\otimes\,\textsc{add}}
    \ket{0}\ket{\ell}\ket{j-\ell\bmod\!N}+
    \ket{1}\ket{\ell}\ket{j+\ell\bmod\!N}\\
    &\xrightarrow{\,\quad \id_1\otimes\id_m\otimes\,\textsc{shuffle} \,\quad}
    \ket{0}\ket{\ell}\ket{(j-\ell\bmod\!N)/2}+
    \ket{1}\ket{\ell}\ket{N/2 + (j+\ell-1\bmod\!N)/2},
\end{align}
where $N=2^n$.
We finally erase the parity qubit to achieve an implementation for $P_\ell$.
To this end, we note that the parity qubit is $\ket{1}$ only if the value encoded in the last register is greater than $N/2$; see Eq.~\eqref{eq:P_ell_action}.
Hence a \textsc{cnot} from the qubit representing the MSB of the value encoded in the system register to the parity qubit would erase this qubit.

The quantum circuit in the dotted-line box in FIG.~\ref{fig:1QWT} gives an implementation for the \textsc{select} operation based on the described approach.
The sequence of \textsc{swap} gates in this circuit gives a gate-level implementation for \textsc{shuffle} in Eq.~\eqref{eq:shuffle}

\begin{figure*}
\centering
    \includegraphics[width=.8\linewidth]{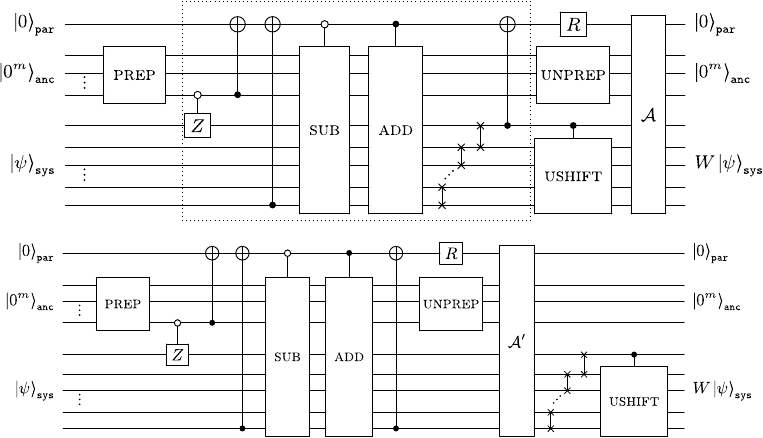}
    \caption{
    Equivalent quantum circuits for executing a single-level QWT comprised of high-level operations.
    Three registers are used:
    the parity register 
    \texttt{par} (one qubit),
    the ancilla register \texttt{anc} ($m$ qubits), and system register \texttt{sys} ($n$ qubits).
    The state of \texttt{sys} register is in a superposition of $\ket{j}$ states for different values of $j$ and the state of \texttt{anc} register, after applying \textsc{prep} with action given in Eq.~\eqref{eq:prep}, is in a superposition of~$\ket{\ell}$ states for different values of~$\ell$.
    The gates inside the dotted-line box implement the \textsc{select} operation in Eq.~\eqref{eq:select} as follows.
    The $\ket{0}$-controlled $Z$ is applied as per Eq.~\eqref{eq:LCU}.
    The first two \textsc{cnot}s compute the parity of $j$ and $\ell$ by their~LSB.
    Then controlled on the parity qubit, we apply \textsc{sub} (parity zero) or \textsc{add} (parity one).
    The sequence of \textsc{swap} gates implement the \textsc{shuffle} operation in Eq.~\eqref{eq:shuffle}.
    The subsequent \textsc{cnot} resets the parity qubit to~$\ket{0}$ because the state of \texttt{par} is filliped to $\ket{1}$ only if $j$ and $\ell$ have opposite parity as per Eq.~\eqref{eq:par}; otherwise it stays $\ket{0}$.
    If \texttt{par} is~$\ket{1}$, the MSB of \texttt{sys} register is in the state $\ket{1}$ as the value encoded in \texttt{sys} is greater than $N/2$ by Eq.~\eqref{eq:P_ell_action}, so the last \textsc{cnot} resets the parity qubit. 
    The \textsc{cnot} has no action if \texttt{par} is $\ket{0}$. This is because the value encoded in the system register is less than~$N/2$ by Eq.~\eqref{eq:P_ell_action} when $j$ and $\ell$ have same parity. Consequently, the MSB of \texttt{sys} is $\ket{0}$, making the last \textsc{cnot} inactive.
    The controlled-\textsc{ushift} operation, with \textsc{ushift} given in Eq.~\eqref{eq:ushift}, maps the implemented unitary by \textsc{select} to the single-level QWT~$W$ as in Eq.~\eqref{eq:WU}.
    The rotation gate~$R$ is used for amplitude amplification $\mathcal{A}$ given in Eq.~\eqref{eq:AAoprator}.
    The bottom circuit follows from the top circuit.
    The amplitude amplification $\mathcal{A}'$ is unitarily equivalent to $\mathcal{A}$.
    }
    \label{fig:1QWT}
\end{figure*}

\subsection{A compilation for state-preparation operations}
\label{subsec:prep}

Here we provide procedures for implementing the \textsc{linprep} and \textsc{prep} operations that prepare states with linear and square-root coefficients, respectively.
We begin with an implementation for \textsc{linprep} in Eq.~\eqref{eq:linprep} using the rotation gate, defined as
\begin{equation}
    R(\theta_\ell):=
    \begin{bmatrix}
        \cos\theta_\ell & \sin\theta_\ell\\
        -\sin\theta_\ell & \cos\theta_\ell
    \end{bmatrix}
\end{equation}
for some known angle $\theta_\ell$, and the increment gate that preforms the map $\ket{\ell}\mapsto\ket{\ell+1}$ for $\ket{\ell}$ an $m$-qubit basis state.
Notice that the increment gate is indeed the downshift permutation $S^\downarrow_m$ defined in Eq.~\eqref{eq:shift} and its inverse is the upshift permutation $S^\uparrow_m$.

The \textsc{linprep} operation prepares a quantum state with amplitudes given by the wavelet filter $\bm{h} =(h_0,\ldots,h_{M-1})^\top$, a column vector of~$M$ real numbers that satisfy Eq.~\eqref{eq:conds}.
By the procedure given in Ref.~\cite{EW18}, the wavelet filter vector $\bm{h}$ of length $M=2\mathcal{K}$ can be achieved by a sequence of $\mathcal{K}$ unitaries $U_\ell$ as $\bm{h}=U_{\mathcal{K}-1}\cdots U_1U_0\, \textup{e}_{\mathcal{K}}$, where $\textup{e}_\ell$ is the $\ell$th column of the $M$-by-$M$ identity matrix and the unitary $U_\ell$ is constructed from rotation gates $R(\theta_\ell)$ as illustrated in FIG.~\ref{fig:binCirc}(a).
As an example, for $M=6$ we have
\begin{equation}
    \begin{bmatrix}
        h_0\\
        h_1\\
        h_2\\
        h_3\\
        h_4\\
        h_5
    \end{bmatrix}
    =
    \begin{bmatrix}
        &c_2 & s_2&    &    \\
        &-s_2& c_2&    &    \\
        &    &    & c_2& s_2\\
        &    &    &-s_2& c_2\\
        &&&    &    & c_2& s_2\\
        &&&    &    &-s_2& c_2
    \end{bmatrix}
    \begin{bmatrix}
        1& & &    &    \\
        &c_1 & s_1&    &    \\
        &-s_1& c_1&    &    \\
        &    &    & c_1& s_1\\
        &    &    &-s_1& c_1\\
        &    &    & &   &1
    \end{bmatrix}
    \begin{bmatrix}
        1 & & & & &\\
          &1& & & &\\
          & & c_0& s_0 & \\
          & &-s_0& c_0 & \\
          & &    &     &1\\
          & &    &     &&1
    \end{bmatrix}
    \begin{bmatrix}
        0\\
        0\\
        0\\
        1\\
        0\\
        0
    \end{bmatrix}
\end{equation}
where $c_\ell:=\cos\theta_\ell$ and $s_\ell:=\sin\theta_\ell$.

Having classically precomputed the rotation angles $(\theta_0,\theta_1,\ldots,\theta_{\mathcal{K}-1})$
by the procedure in Ref.~\cite{EW18}, we construct a quantum circuit for \textsc{linprep} as follows.
Let $m=\ceil{\log_2M}$.
For $M$ that is not a power of $2$, we pad $(2^m-M)/2$ zeros from left and right to the wavelet filter vector $\bm{h}$ to have a vector as $(0,\ldots,0,h_0,\ldots,h_{M-1},0\ldots,0)^\top$.
Then unitaries $U_\ell$ are modified accordingly so that $U_{\mathcal{K}-1}\cdots U_1U_0\, \textup{e}_{2^{m-1}}$ yields the modified wavelet filter vector.
A diagrammatic representation of this approach is shown in FIG.~\ref{fig:binCirc}(b) for $M=6$.
For each $\theta_\ell$ with even $\ell$, first we shift elements of the vector one place to the right, shown in FIG.~\ref{fig:binCirc}(c) by the right arrow, to be able to apply the rotations in parallel on consequent pairs of the vector elements and then shift the vector elements one place to the~left.
Because the rotations are in parallel, we can decompose the associated unitary as a tensor product of an identity and a rotation gate as $\mathbbm{1}_{m-1}\otimes R_\ell$.
Shifting to the right (left) is implemented by the increment gate (inverse of the increment gate) on a quantum computer.
The inverse of the increment gate is applied $(2^m-M)/2$ times at the end to achieve the desired amplitudes as $(h_0,\ldots,h_{M-1},0\ldots,0)^\top$.
The quantum circuit in FIG.~\ref{fig:binCirc}(d) illustrates the case where $M=6$.

\begin{figure*}
\centering
    \includegraphics[width=.98\linewidth]{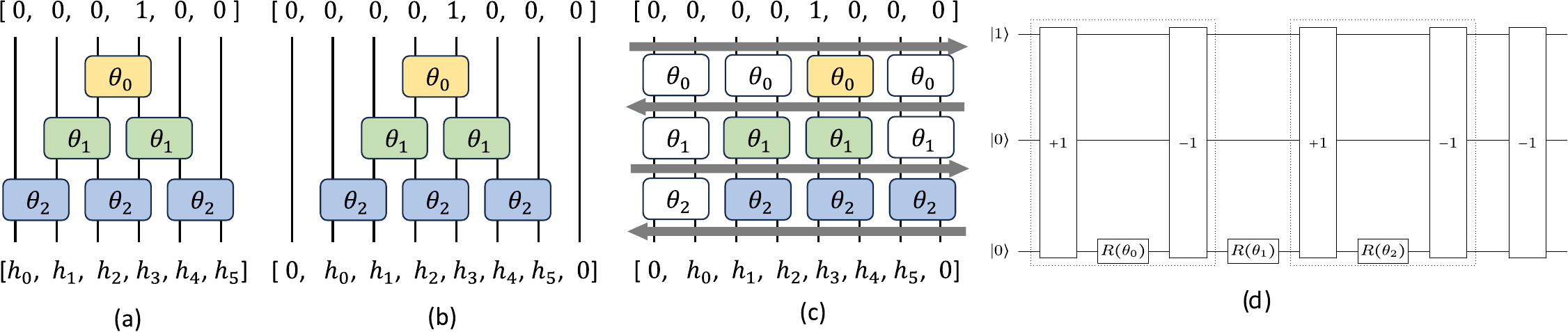}
    \caption{
    (a)~Diagrammatic representation of the procedure producing the wavelet filters for a wavelet of order~$M$ from a particular initial vector by a set of $M/2$ rotations;
    $M=6$ is illustrated here.
    (b)~Zero padding for cases that $M$ is not a power of two.
    (c)~Rotations can be applied in parallel. The right~(left) arrow represents shifting elements of the vector one place to the right~(left). As the initial vector is a particular vector, the rotations represented by white boxes do not affect the vector.
    (d)~Quantum circuit for \textsc{linprep} using rotation gates, the increment gate denoted by $+1$ and its inverse denoted by~$-1$.
    The gate $+1~(-1)$ is applied before (after) each rotation gate $R_\ell$ with even $\ell$, as in dotted boxes.
    }
    \label{fig:binCirc}
\end{figure*}

We now describe an approach for implementing the \textsc{prep} operation in Eq.~\eqref{eq:prep}.
This operation prepares the state with square-root coefficients, i.e., the state
$\ket{\psi}:=\sum_\ell \sqrt{p_\ell}\ket{\ell}$ with $p_\ell:=\abs{h_\ell}/h$.
To prepare this state, first we prepare the uniform superposition state $(1/\sqrt{M})\sum_\ell \ket{\ell}$ and then apply the uniformly controlled rotation~\cite{MVB+04} that performs the map
\begin{equation}
    \label{eq:ucr}
    \ket{0}\ket{\ell} \mapsto
    (\cos(\theta_\ell) \ket{0}+\sin(\theta_\ell)\ket{1})\ket{\ell},
    \quad
    \cos(\theta_\ell) := \sqrt{\abs{h_\ell}/h}.
\end{equation}
The output state after this operation is $\sin\alpha\ket{0}\ket{\psi}+\cos\alpha\ket{\perp}$ with the success amplitude $\sin\alpha:=1/\sqrt{M}$.
As per the discussion in \S\ref{subsec:success}, the state $\ket{\psi}$ is achieved using one extra qubit and $\Theta(\sqrt{M})$ rounds of amplitude amplification.
We remark that the same approach can be used to implement \textsc{unprep} in Eq.~\eqref{eq:prep}.

\section{Multi-level and packet QWT}
\label{sec:dQWT}

We now use our implementation for the single-level QWT as a subroutine and construct quantum algorithms for multi-level and packet QWTs.
To this end, let $W_n^{(d)}$ denote the $d$-level wavelet transform of size $2^n\times 2^n$ and let $P_n^{(d)}$ denote the $d$-level wavelet packet transform of the same size.
Also let $W_n^{(1)}=W_n$ for notation simplicity.

The $d$-level wavelet transform can be recursively decomposed as~\cite[Appendix~A]{BSB+22}
\begin{equation}
\label{eq:mQWT_recursive}
    W^{(d)}_n = \left(W^{(d-1)}_{n-1} \oplus \id_{n-1}\right) W_n.
\end{equation}
This decomposition follows from the notion of multi-level wavelet transform:
at each level, the transformation is only applied on the low-frequency component (i.e., the top part) of the column vector it acts on.
The wavelet packet transform, however, acts on both the low- and high-frequency components, so we have the decomposition
\begin{equation}
\label{eq:pQWT_recursive}
    P^{(d)}_n = \left(P^{(d-1)}_{n-1}
    \oplus P^{(d-1)}_{n-1}\right) W_n
    = \left(\id_1\otimes P^{(d-1)}_{n-1}\right)W_n
\end{equation}
for the wavelet packet transform.
Equation~\eqref{eq:mQWT_recursive} yields the decomposition 
\begin{equation}
\label{eq:mQWT_decomp}
    W^{(d)}_n =
    \Lambda^{d-1}_0(W_{n-d+1})
    \cdots
    \Lambda^2_0(W_{n-2})
    \Lambda^1_0(W_{n-1})
    W_n,
\end{equation}
where
\begin{equation}
    \Lambda^s_0(W_{n-s}) :=
    \ketbra{0^s}{0^s}\otimes W_{n-s} +
    (\id_s-\ketbra{0^s}{0^s}) \otimes \id_{n-s}
\end{equation}
is the $\ket{0^s}$-controlled unitary operation, for any $s\in\{1,\ldots,d-1\}$.
Similarity, Eq.~\eqref{eq:pQWT_recursive} yields the decomposition
\begin{equation}
    P^{(d)}_n =
    (\id_{d-1}\otimes W_{n-d+1})
    \cdots
    (\id_2\otimes W_{n-2})
    (\id_1\otimes W_{n-1})
    W_n
\end{equation}
for the $d$-level wavelet packet transform.
These decompositions give a simple procedure for implementing a multi-level and packet QWT shown by the quantum circuits in FIG.~\ref{fig:dQWT}

The multi-level packet QWT is construed from single-level QWTs that can be implemented by the method described in~\S\ref{sec:1QWT}.
In contrast, the multi-level QWT is constructed from multi-controlled single-level QWTs.
As in FIG.~\ref{fig:dQWT}(c), we break down these multi-controlled operations in terms of multi-bit Toffoli gates and controlled single-level QWTs.
We discuss an implementation of a multi-bit Toffoli gate in~\S\ref{subsec:keysubs} and a controlled single-level QWT in~\S\ref{subsec:dQWTcost}, where we analyze the complexities of these~operations.

\begin{figure*}
\centering
    \includegraphics[width=.95\linewidth]{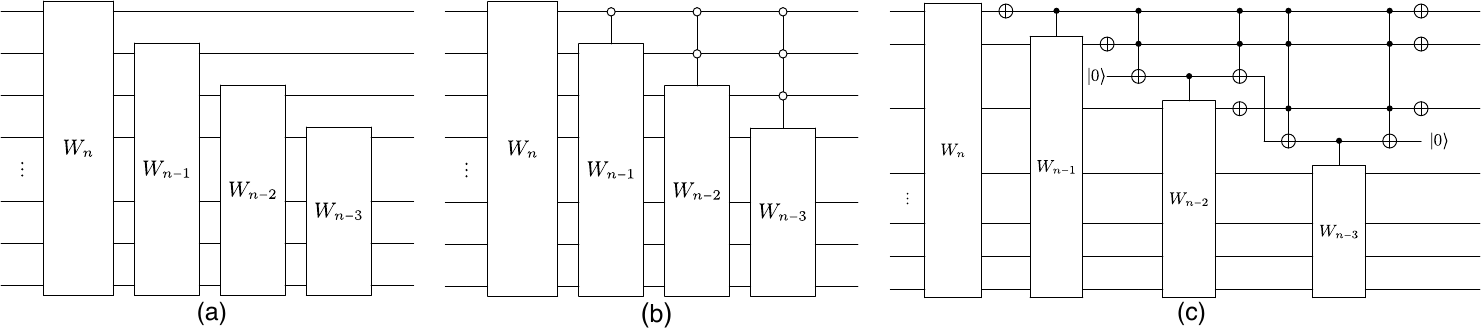}
    \caption{
    Quantum circuits for (a)~the $d$-level packet QWT and~(b) the $d$-level QWT using single-level QWTs; $d=4$ is illustrated here.
    (c) An implementation of multi-controlled single-level QWTs needed for the multi-level QWT in (b) using multi-bit Toffoli gates, controlled single-level QWT and one ancilla qubit that starts and ends in the $\ket{0}$ state. 
    }
    \label{fig:dQWT}
\end{figure*}

\section{Complexity analysis}
\label{sec:complexity}

In this section, we analyze the computational cost of executing single-level, multi-level and packet QWTs, thereby establishing Theorems~\ref{theorem:1QWT}--\ref{theorem:pQWT}.
We begin by analyzing the computational cost of key subroutines in our algorithms in \S\ref{subsec:keysubs}.
We then build upon them and provide cost analysis for the single-level QWT in \S\ref{subsec:sQWTcost} and for the multi-level and packet QWTs in \S\ref{subsec:dQWTcost}.

In our cost analysis and in implementing the key operations, we use ancilla and ``borrowed" qubits.
In contrast to an ancilla qubit that starts from $\ket{0}$ and returns to $\ket{0}$, a borrowed qubit can start from any state and will return to its original state.
The purpose of using borrowed qubits is that they enable simple implementation for complex multi-qubit operations.
The availability of a sufficient number of qubits in our algorithm on which the key operations do not act on them allows us to use them as borrowed qubits in implementing such operations.

\subsection{Complexity of key subroutines}
\label{subsec:keysubs}

Here we analyze the cost of key subroutines used in our algorithm for a single-level QWT: \textsc{prep}, \textsc{select} and \textsc{ushift}, the latter of which adds a classically known constant value to the value encoded in a quantum register.
We also analyze the cost of implementing a multi-qubit reflection, an operation used in the amplitude amplification part of our algorithm.

For simplicity of cost analysis, we state the cost of each key subroutine in a lemma and proceed with analyzing the cost in the poof.
We begin with a lemma stating the cost of executing a multi-bit Toffoli gate,
an operation frequently used in our algorithm and provides an implementation for the multi-qubit reflection about the all-zero state.

\begin{lemma}
\label{lemma:multiToffoli}
    The $(m+1)$-bit Toffoli gate with $m\ge3$, defined as $\Lambda_1^m(X):=\ketbra{1^m}{1^m}\otimes X + (\mathbbm{1}_m-\ketbra{1^m}{1^m})\otimes\mathbbm{1}_1$,
    can be implemented by either of the following computational resources:
    \begin{enumerate}[(I)]
        \item $m-2$ borrowed qubits and $\mathcal{O}(m)$ \textup{\textsc{toffoli}} gates, or 
        \item one borrowed qubit and $\mathcal{O}(m)$ \textup{\textsc{toffoli}} and elementary one- or two-qubit gate.
    \end{enumerate}
\end{lemma}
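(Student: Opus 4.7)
The plan is to follow the well-known constructions of Barenco et al.~for simulating multi-controlled Toffoli gates on a small number of ancillas. The key observation that makes borrowed qubits usable is that the Toffoli gate is its own inverse, so any spurious flips introduced on a borrowed qubit can be cleanly reversed by a mirror-image sequence of gates, restoring the (unknown) initial state of that qubit. Throughout, I will denote the $m$ control qubits by $c_1,\ldots,c_m$ and the target by $t$.

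For part~(I), I would arrange the $m-2$ borrowed qubits $b_1,\ldots,b_{m-2}$ alongside the controls and build a ``ladder'' of ordinary Toffoli gates: compute $c_1\wedge c_2 \to b_1$, then $b_1\wedge c_3 \to b_2$, continuing up to $b_{m-3}\wedge c_{m-1}\to b_{m-2}$, and finally $b_{m-2}\wedge c_m\to t$. Because the $b_i$ begin in arbitrary states, the ladder entangles them with the controls, so I would run the reverse ladder to restore them, and then perform the forward and reverse sweeps a second time to cancel the residual relative-phase pattern that a single sweep leaves behind. A straightforward basis-state analysis shows that exactly $4(m-2)$ Toffolis suffice and that the target flips iff all $m$ controls are one, yielding the claimed $\mathcal{O}(m)$ gate count with all borrowed qubits returned to their initial states.

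For part~(II), the strategy is to reduce to part~(I) by splitting the controls roughly in half. Setting $m_1=\lceil m/2\rceil$ and $m_2=m-m_1$, I would use the single borrowed qubit $a$ as an intermediate target: first apply $\Lambda_{m_1}(X)$ on controls $c_1,\ldots,c_{m_1}$ and target $a$, using the $m_2+1$ qubits $c_{m_1+1},\ldots,c_m,t$ as borrowed workspace; then apply $\Lambda_{m_2+1}(X)$ on controls $a,c_{m_1+1},\ldots,c_m$ and target $t$, using the $m_1$ qubits $c_1,\ldots,c_{m_1}$ as borrowed workspace. Each sub-gate has enough borrowed qubits to invoke part~(I), so each costs $\mathcal{O}(m)$ Toffolis; a further repetition of the two sub-gates uncomputes $a$, keeping the total at $\mathcal{O}(m)$. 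Finally, each Toffoli is expanded into $\mathcal{O}(1)$ elementary one- and two-qubit gates, explaining the slightly weaker gate set in the statement. The main obstacle I anticipate is not any one step but the bookkeeping needed to verify that the double forward-and-reverse sweep exactly cancels the residual phase on borrowed qubits in superposition; this reduces to a finite computational-basis verification that the composite action matches $\Lambda_1^m(X)$ on the control-and-target register while acting as identity on the borrowed register.
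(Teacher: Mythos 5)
Your route is the same one the paper takes --- the paper's ``proof'' of this lemma is just a citation to Gidney's borrowed-ancilla construction for part~(I) and to Barenco et al.\ (Corollary~7.4) for part~(II), and you are reconstructing exactly those circuits. Your part~(II) is sound: splitting the controls into halves $A=c_1\wedge\cdots\wedge c_{m_1}$ and $B=c_{m_1+1}\wedge\cdots\wedge c_m$, using the single borrowed qubit $a$ as an intermediate target while borrowing the idle half (plus $t$) as workspace for each sub-gate, and running the pair of sub-gates twice gives $a\mapsto a$ and $t\mapsto t\oplus(A\wedge B)$, since the target accumulates $(a\oplus A)B\oplus aB=AB$.

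However, the gate ordering you give in part~(I) --- on which part~(II) relies --- does not implement the gate. Write $G_1=\mathrm{Tof}(c_1,c_2;b_1)$, $G_i=\mathrm{Tof}(c_{i+1},b_{i-1};b_i)$ for $2\le i\le m-2$, and $G_{m-1}=\mathrm{Tof}(c_m,b_{m-2};t)$. In the schedule ``forward, reverse, forward, reverse'' the reverse sweep restores every $b_i$ to its initial value, and the second forward sweep then recreates exactly the same value of $b_{m-2}$ that the first one did; hence the two (or four, depending on whether the reverse sweep includes $G_{m-1}$) applications of the target gate all flip $t$ by the same bit $c_m\cdot b_{m-2}$, and the net action on the target is the identity. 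The working circuit is asymmetric: the first pass is the palindrome $G_{m-1}G_{m-2}\cdots G_2G_1G_2\cdots G_{m-2}G_{m-1}$, so that the two target flips see $b_{m-2}$ and $b_{m-2}\oplus(c_1c_2\cdots c_{m-1})$ respectively and their XOR is exactly $c_1c_2\cdots c_m$; the second pass repeats only the inner palindrome $G_{m-2}\cdots G_1\cdots G_{m-2}$, which restores the borrowed qubits without touching $t$. That asymmetry is also where the count $4(m-2)=(2m-3)+(2m-5)$ you quote actually comes from; your schedule yields $4m-6$ or $4m-4$ gates, not $4(m-2)$. A final small point: every gate here is a computational-basis permutation, so there is no ``residual relative phase'' to cancel --- the only thing the second pass must repair is the bit values of the borrowed register.
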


The implementation based on $m-2$ borrowed qubits follows from Gidney's method~\cite{gid15} for implementing a multi-bit Toffoli gate and the one using one borrowed qubit follows by the method given in Ref.~\cite[Corollary~7.4 ]{BBC+95} and also in Ref.~\cite{HLZ+17}.
Notice that the gate cost of the two methods scales similarly, but one uses only a single borrowed qubit.
However, we sometimes use the method with $m-2$ borrowed qubits due to its simplicity in implementing a multi-bit Toffoli and the availability of a sufficient number of qubits in our algorithm that can be borrowed.

We proceed with the cost of \textsc{select} in the following lemma.

\begin{lemma}
\label{lemma:select}
    \textup{\textsc{select}} in Eq.~\eqref{eq:select} can be executed using one ancilla and one borrowed qubit, two Hadamard and $\mathcal{O}(n)$ \textup{\textsc{not}}, \textup{\textsc{cnot}} and \textup{\textsc{toffoli}} gates.
\end{lemma}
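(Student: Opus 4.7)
The plan is to assemble \textsc{select} from the circuit sketched in Section~\ref{subsec:select} (the dotted-line box of FIG.~\ref{fig:1QWT}) and verify the resource count piece by piece. By Eq.~\eqref{eq:LCU}, $U_\ell$ differs from the permutation $P_\ell$ only by a Pauli-$Z$ on the MSB of the $j$-register when $\ell$ is even. I would therefore split \textsc{select} into two subcircuits: (i) a single implementation of $P_\ell$, quantum-controlled by the full $\ell$-register via Eq.~\eqref{eq:P_ell_action}; and (ii) a $\ket{0}$-controlled-$Z$ whose control is the LSB of the $\ell$-register and whose target is the MSB of the $j$-register, contributing the extra sign for even $\ell$.

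For part~(i), I would follow the four steps of the construction exactly: (a)~compute \textsc{par} from Eq.~\eqref{eq:par} into the single parity ancilla with two \textsc{cnot}s from the LSBs of the two registers; (b)~conditioned on the parity qubit, apply \textsc{sub} or \textsc{add} from Eq.~\eqref{eq:add}; (c)~apply \textsc{shuffle} from Eq.~\eqref{eq:shuffle} as the cascade of $n-1$ \textsc{swap}s shown in FIG.~\ref{fig:1QWT}, costing $3(n-1)$ \textsc{cnot}s; (d)~uncompute the parity bit with a single \textsc{cnot} from the MSB of the $j$-register, validated by the parity/sign analysis given in the excerpt. Step~(b) is where the borrowed qubit is used: a standard reversible ripple-carry adder realizes an $n$-bit modular addition with $\mathcal{O}(n)$ \textsc{toffoli} and \textsc{cnot} gates and one carry qubit, which can be supplied by an arbitrary-state borrowed qubit via standard uncomputation. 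The single parity-controlled choice between \textsc{add} and \textsc{sub} can be folded into one controlled adder by inverting the $\ell$-bits (conditionally on the parity qubit) before and after the addition, costing only $\mathcal{O}(n)$ extra \textsc{cnot}/\textsc{toffoli} gates.

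For part~(ii), I would use the identity $\ket{0}\!\bra{0}\otimes Z + \ket{1}\!\bra{1}\otimes \id = (X\otimes \id)(\id\otimes H)\,\textsc{cnot}\,(\id\otimes H)(X\otimes \id)$, which contributes precisely the two Hadamards in the lemma statement, plus two \textsc{not}s and one \textsc{cnot} of $\mathcal{O}(1)$ overhead. Summing (i) and (ii) gives the claimed resource budget: one ancilla (the parity qubit), one borrowed qubit (the adder carry), two Hadamards, and $\mathcal{O}(n)$ gates from $\{\textsc{not}, \textsc{cnot}, \textsc{toffoli}\}$.

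The hard part will be step~(b): ensuring the controlled modular add/subtract has linear gate count while using only a single borrowed qubit, and carefully verifying that this qubit can be safely borrowed (cleared and restored) inside the adder. Everything else is bookkeeping of \textsc{cnot}s and \textsc{swap}s, and the Hadamard count comes out exactly from the single controlled-$Z$ in part~(ii).
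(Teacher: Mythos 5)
Your proposal is correct and follows essentially the same route as the paper: decompose \textsc{select} into the $\ket{0}$-controlled-$Z$ (yielding the two Hadamards), the parity \textsc{cnot}s, the parity-controlled modular add/subtract realized by a ripple-carry adder with one extra qubit, and the $n-1$ \textsc{swap}s for \textsc{shuffle}. The only cosmetic difference is how the control is attached to the adder---you conditionally complement the $\ell$-bits, while the paper promotes the adder's \textsc{toffoli}s to four-bit Toffolis and discharges those via Lemma~\ref{lemma:multiToffoli} with the borrowed qubit---but both give the same $\mathcal{O}(n)$ count and qubit budget.
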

\begin{proof}
    By FIG.~\ref{fig:1QWT}, \textsc{select} is composed of one controlled-$Z$ gate, three \textsc{cnot} gates,
    one controlled-\textsc{sub}, one controlled-\textsc{add} and $n-1$ \textsc{swap} gates.
    The controlled-$Z$ gate can be executed using two Hadamard gates and one \textsc{cnot}, and each \textsc{swap} can be executed using three \textsc{cnot}s.
    By the compilation given in Ref.~\cite{CDK+04}, the \textsc{add} itself can be implemented using one ancilla qubit and $\mathcal{O}(n)$ \textsc{not}, \textsc{cnot} and \textsc{toffoli} gates.
    Hence the controlled-\textsc{add} can be compiled using $\mathcal{O}(n)$ \textsc{cnot}, \textsc{toffoli} and four-bit Toffoli gates,
    the latter of which can be implemented using one borrowed qubit and four \textsc{toffoli} gates by Lemma~\ref{lemma:multiToffoli}.
\end{proof}

In the next lemma, we show that the $m$-qubit reflection $R_m$ about the all-zero state $\ket{0^m}$ can be implemented using $m-2$ borrowed qubits.

\begin{lemma}
\label{lemma:reflect}
    The $m$-qubit reflection $R_m:=\ketbra{0^m}{0^m}-\id_m$ can be executed using
    one ancilla and $m-2$ borrowed qubits along with
    two Hadamard, $2m+2$ \textup{\textsc{not}} and $\mathcal{O}(m)$ \textup{\textsc{toffoli}} gates.
\end{lemma}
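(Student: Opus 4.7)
The plan is to realize $R_m$ via phase kickback from a single ancilla qubit, conjugated by $X^{\otimes m}$ on the target register so that the state to be reflected about becomes the all-ones string $\ket{1^m}$. First I would allocate one ancilla qubit (initialized in $\ket{0}$) and prepare it in $\ket{-}$ by applying an $X$ followed by a Hadamard. Next I would apply $X^{\otimes m}$ on the $m$-qubit target register, so that reflecting about $\ket{0^m}$ becomes reflecting about $\ket{1^m}$.

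I would then apply an $(m+1)$-bit Toffoli $\Lambda_1^m(X)$ with the $m$ target qubits as controls and the ancilla as target. Because the ancilla sits in $\ket{-}$, this flip kicks back a phase of $-1$ onto precisely the $\ket{1^m}$ component of the controls while leaving all other basis states untouched. Uncomputing consists of a second $X^{\otimes m}$ on the targets, followed by an $H$ and then an $X$ on the ancilla to restore it to $\ket{0}$. Up to an irrelevant global phase, the whole sequence acts as $2\ketbra{0^m}{0^m}-\id_m$ on the target register, which is the intended reflection.

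Counting resources step by step yields $2m+2$ \textsc{not} gates (two on the ancilla and $2m$ on the targets), $2$ Hadamards on the ancilla, and a single $(m+1)$-bit Toffoli. The last of these is discharged by Lemma~\ref{lemma:multiToffoli}(I), contributing $m-2$ borrowed qubits and $\mathcal{O}(m)$ \textsc{toffoli} gates. Adding everything up gives one ancilla and $m-2$ borrowed qubits, two Hadamards, $2m+2$ \textsc{not} gates, and $\mathcal{O}(m)$ \textsc{toffoli} gates, matching the stated cost.

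The only non-trivial design issue is sourcing the $m-2$ borrowed qubits for the $(m+1)$-bit Toffoli, and this is essentially free in context: when $R_m$ is invoked inside the amplitude-amplification operator $\mathcal{A}$ of Eq.~\eqref{eq:AAoprator}, the $n$-qubit system register is idle and supplies far more than $m-2$ qubits that can be borrowed. A secondary choice is to route the $-1$ phase through a dedicated ancilla in $\ket{-}$ rather than through a Hadamard conjugation on one of the $m$ reflected qubits; this is what produces the extra two \textsc{not}s in the count, but it keeps the borrowed-qubit bookkeeping clean and dovetails with how $R_m$ appears in the amplitude-amplification step of the single-level QWT circuit in FIG.~\ref{fig:1QWT}.
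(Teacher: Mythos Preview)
Your proof is correct and is essentially identical to the paper's: both conjugate by $X$'s on all $m+1$ qubits, sandwich the ancilla with Hadamards to set up phase kickback via $\ket{-}$, apply a single $(m+1)$-bit Toffoli, and then invoke Lemma~\ref{lemma:multiToffoli}(I) (Gidney's construction) for the $\mathcal{O}(m)$ \textsc{toffoli} count with $m-2$ borrowed qubits. Your remark about sourcing the borrowed qubits from the idle system register also mirrors the paper's later observation.
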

\begin{proof}
Using the phase kickback trick and one ancilla qubit, we can implement $R_m$ up to an irrelevant global $-1$ phase factor as
\begin{equation}
\label{eq:Rm}
    (R_m\otimes \mathbbm{1}_1)\ket{\psi}\ket{0}
    = -X^{\otimes m+1}(\id_m\otimes H)\Lambda_1^m(X)
    (\id_m\otimes H) X^{\otimes m+1}\ket{\psi}\ket{0},
\end{equation} 
where $\ket{\psi}$ is any $m$-qubit state and $\Lambda_1^m(X)$ is the $(m+1)$-bit Toffoli gate.
The lemma then follows by Gidney's method~\cite{gid15} for implementing the $(m+1)$-bit Toffoli using $\mathcal{O}(m)$ \textsc{toffoli} gates and $m-2$ borrowed qubits.
\end{proof}

We remark that the $(m+1)$-bit Toffoli can be implemented using only one borrowed qubit and $\mathcal{O}(m)$ \textsc{toffoli} and elementary one- or two-qubit gate by Lemma~\ref{lemma:multiToffoli}.
However, we use the method with $m-2$ borrowed qubits due to its simplicity in implementing a multi-bit Toffoli and the availability of a sufficient number of qubits in our algorithm that can be borrowed.

The following lemma states the cost of adding a known classical value to a quantum register.
We use a controlled version of this operation in our algorithm, the cost of which is stated in the following corollary.
\begin{lemma}
\label{lemma:add}
    Adding a classically known $m$-bit constant to an $n$-qubit register with $m<n$ can be achieved using $m+1$ ancilla qubits and $\mathcal{O}(m)$ \textup{\textsc{not}}, \textup{\textsc{cnot}} and \textup{\textsc{toffoli}} gates. 
\end{lemma}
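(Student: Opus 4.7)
The plan is to reduce adding a classical $m$-bit constant $c=c_{m-1}\cdots c_0$ to a standard in-place quantum register addition. First, I would load $c$ into $m$ of the $m+1$ available ancilla qubits by applying a \textsc{not} gate to the $i$-th ancilla for each index $i$ with $c_i=1$; this consumes at most $m$ \textsc{not}s. With $c$ now encoded in an $m$-qubit quantum register, I would apply the in-place ripple-carry adder of Cuccaro et al.~\cite{CDK+04} between this ancilla register and the target, using the single remaining ancilla as the carry qubit. Finally, I would uncompute the constant by running the loading \textsc{not}s in reverse to restore all $m+1$ ancillas to $\ket{0}$. Because the Cuccaro adder adds two $k$-bit registers using one carry ancilla and $O(k)$ \textsc{cnot} and \textsc{toffoli} gates, applying it to our $m$-bit constant register contributes $O(m)$ elementary gates to the circuit.

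The main technical subtlety—and what I expect to be the most delicate step of the proof—is that the required addition is modulo $2^n$ with $n>m$, so a carry generated at bit $m-1$ must in principle propagate through the upper $n-m$ bits of the target register. Because the ``constant'' bits at those upper positions are classically zero, the Cuccaro majority and uncompute-majority blocks collapse at each such position to a simple controlled propagation of the single carry bit, which I would implement using the carry ancilla already in place and no additional workspace. Careful bookkeeping is then needed to argue that this collapsed upper ripple can be folded into the overall $O(m)$ bound rather than contributing $O(n-m)$; this is the part of the proof that most warrants attention, since a naive ripple would exceed the claimed complexity, and the argument must exploit the specifically classical (all-zero) structure of the upper operand bits to discard the gates that act as identities and to keep only the essential carry-propagation steps within the stated cost.
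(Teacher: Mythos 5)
Your construction coincides with the paper's own proof: load the constant into $m$ ancillas with at most $m$ \textsc{not} gates, add that register to the target with the ripple-carry adder of Ref.~\cite{CDK+04} using the one remaining ancilla as the carry qubit, and uncompute the \textsc{not}s at the end. In terms of approach you are fully aligned with the paper. Where you go beyond it is in flagging the carry propagation into the upper $n-m$ bits of the target register; the paper's proof is silent on this point and simply asserts the $\mathcal{O}(m)$ count from $m<n$.

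However, the resolution you sketch for that subtlety does not close the gap. Even though the upper addend bits are classically zero, the carry emerging from position $m-1$ must still be propagated conditionally through each of the target bits $m,m+1,\ldots,n-1$: at position $i$ one flips the target bit when the incoming carry is $1$ and updates the carry to the AND of the incoming carry with the pre-flip target bit, i.e.\ roughly one \textsc{cnot} and one \textsc{toffoli} per position, which is $\Theta(n-m)$ gates and cannot be ``folded into'' $\mathcal{O}(m)$. In fact no $\mathcal{O}(m)$ circuit for adding an $m$-bit constant modulo $2^n$ exists: for $c=1$ the map is the increment, which acts nontrivially on every one of the $n$ qubits (e.g.\ $\ket{2^{n-1}-1}\mapsto\ket{2^{n-1}}$ flips the most significant bit), so any implementation needs $\Omega(n)$ gates. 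The honest gate count for the lemma as stated is therefore $\mathcal{O}(n)$ rather than $\mathcal{O}(m)$ --- a defect your proposal shares with, and essentially inherits from, the paper's own proof, though you deserve credit for noticing that the step ``most warrants attention.'' This does not damage the downstream results: controlled-\textsc{ushift} is invoked a constant number of times per single-level QWT, whose total cost is already $\mathcal{O}(n)+\mathcal{O}(M^{3/2})$.
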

\begin{proof}
    First, prepare $m$ ancillae in the computational state that encodes the $m$-bit constant.
    This preparation can be achieved by applying at most $m$ \textsc{not} gates.
    Then add this state to the state of the $n$-qubit register by \textsc{add} operation in Eq.~\eqref{eq:add}.
    By $m<n$ and the compilation given in Ref.~\cite{CDK+04}, \textsc{add} can be implemented by one ancilla qubit and $\mathcal{O}(m)$ \textsc{not}, \textsc{cnot} and \textsc{toffoli} gates. 
\end{proof}

The computational cost reported in Lemma~\ref{lemma:add} is indeed the cost of executing \textsc{ushift} in Eq.~\eqref{eq:ushift}.
We use a controlled version of this operation as in the circuit shown in FIG.~\ref{fig:1QWT}.
Because of the \textsc{toffoli} gate in Lemma~\ref{lemma:add}, the controlled-\textsc{ushift} requires implementing a four-bit Toffoli gate, an operation that can be implemented using one borrowed qubit and four \textsc{toffoli} gates by Lemma~\ref{lemma:multiToffoli}.
Therefore, we have the following cost for the controlled \textsc{ushift} as a corollary of Lemma~\ref{lemma:add} and Lemma~\ref{lemma:multiToffoli}.

\begin{corollary}
\label{cor:cushift}
    The controlled-\textup{\textsc{ushift}} operation can be executed by one borrowed qubit, $m+1$ ancilla qubits, and $\mathcal{O}(m)$ \textup{\textsc{cnot}} and \textup{\textsc{toffoli}} gates.
\end{corollary}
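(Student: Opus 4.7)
The plan is to obtain the controlled operation by lifting each gate in the uncontrolled \textsc{ushift} construction of Lemma~\ref{lemma:add} by one level of control, and then arguing that the resulting multi-controlled primitives can all be implemented cheaply using a single shared borrowed qubit.

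First, I would recall from the proof of Lemma~\ref{lemma:add} that \textsc{ushift} is executed by (i)~using at most $m$ \textsc{not} gates to load the classically known constant $\mathcal{K}-1$ into $m$ fresh ancillae, and (ii)~invoking \textsc{add} on these ancillae and the $n$-qubit target register, which by the Cuccaro-\emph{et~al.}\ ripple-carry compilation~\cite{CDK+04} uses one extra ancilla qubit and $\mathcal{O}(m)$ \textsc{not}, \textsc{cnot}, and \textsc{toffoli} gates. To control the whole sequence on an additional qubit, I would promote every gate in this circuit by one level: each \textsc{not} becomes a \textsc{cnot}, each \textsc{cnot} becomes a \textsc{toffoli}, and each \textsc{toffoli} becomes a four-bit Toffoli $\Lambda_1^3(X)$. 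The structure of the circuit, the ancilla overhead ($m+1$), and the asymptotic gate count ($\mathcal{O}(m)$) are preserved by this lifting.

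Next I would handle the $\mathcal{O}(m)$ four-bit Toffolis introduced by lifting the Toffoli gates of \textsc{add}. By Lemma~\ref{lemma:multiToffoli}(II), each $\Lambda_1^3(X)$ admits a decomposition using one borrowed qubit and $\mathcal{O}(1)$ \textsc{toffoli} and one- or two-qubit gates. The key observation is that a borrowed qubit starts and ends in its original state, so the \emph{same} borrowed qubit can be reused sequentially across all $\mathcal{O}(m)$ four-bit Toffolis; such a qubit is available because the algorithm's other registers (e.g., the system register, or the parity qubit outside the region acted on by \textsc{ushift}) are not touched by the controlled-\textsc{ushift}. Summing up yields a total of $\mathcal{O}(m)$ \textsc{cnot} and \textsc{toffoli} gates, $m+1$ ancillae (the $m$ constant-loaded ancillae plus the one used by the ripple-carry adder), and a single borrowed qubit, matching the statement.

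The only point requiring care, and what I would regard as the main obstacle, is verifying that one borrowed qubit truly suffices throughout: one must confirm that Lemma~\ref{lemma:multiToffoli}(II) indeed restores the borrowed qubit after a single $\Lambda_1^3(X)$ so that its state is available unchanged for the next four-bit Toffoli, and that such a qubit exists in the ambient register layout. Both facts follow directly from the definition of ``borrowed'' and from the fact that the controlled-\textsc{ushift} acts only on the control qubit, the $m$-qubit register, and (via $n-1$ target lines) a proper subset of the system qubits, leaving plenty of inert qubits elsewhere in the circuit to serve as the borrowed workspace. With this confirmed, the corollary follows.
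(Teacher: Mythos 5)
Your proposal is correct and matches the paper's own argument: the paper likewise obtains controlled-\textsc{ushift} by adding one control to the circuit of Lemma~\ref{lemma:add}, notes that the \textsc{toffoli} gates thereby become four-bit Toffolis, and implements each of those with one borrowed qubit via Lemma~\ref{lemma:multiToffoli}, preserving the $m+1$ ancillae and $\mathcal{O}(m)$ gate count. Your additional care about sequentially reusing a single borrowed qubit and its availability in the ambient register layout is consistent with the paper's remarks on borrowed qubits in \S\ref{subsec:sQWTcost}.
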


The final lemma states the cost of the \textsc{prep} operation.
We remark that the cost of this operation is independent of $n$ as \textsc{prep} generates a quantum state on a number of ancilla qubits that depends on the wavelet order~$M$. 
\begin{lemma}
\label{lemma:prep}
    \textup{\textsc{linprep}} in Eq.~\eqref{eq:linprep} can be executed using $\mathcal{O}(M\log_2M)$ elementary gates and $\ceil{\log_2M}$ borrowed qubits.
    \textup{\textsc{prep}} and \textup{\textsc{unprep}}
    in Eq.~\eqref{eq:prep} can be executed using $\mathcal{O}(M^{3/2})$ elementary gates and one ancilla qubit.
\end{lemma}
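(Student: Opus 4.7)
I would handle the two operations separately, since they rest on different constructions.

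For \textsc{linprep}, the plan is to count gates in the circuit shown in FIG.~\ref{fig:binCirc}(d). That circuit comprises: (i) a single \textsc{not} applied to one qubit of the $m$-qubit ancilla register to initialize it in the basis state $\ket{2^{m-1}}$ (the center of the zero-padded filter vector); (ii) $\mathcal{K}=M/2$ single-qubit rotations $R(\theta_\ell)$, each of the even-indexed ones sandwiched between an increment $(+1)$ and a decrement $(-1)$ on the $m$-qubit register; and (iii) at most $(2^m-M)/2$ further decrements to strip the trailing zero-padding. The key input is a compilation of the increment gate $S^\downarrow_m$ on $m$ qubits in $\mathcal{O}(m)$ elementary gates using $\mathcal{O}(m)$ borrowed qubits, which follows either from Gidney's incrementer or, equivalently, from a ripple-carry adder that adds the classical constant~$1$. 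Summing, one obtains $\mathcal{O}(M)$ single-qubit rotations plus $\mathcal{O}(M)$ increments at $\mathcal{O}(\log_2 M)$ gates each, for a total of $\mathcal{O}(M\log_2 M)$ elementary gates, with $\ceil{\log_2 M}$ borrowed qubits used inside the incrementer.

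For \textsc{prep}, the plan is to follow the recipe sketched in \S\ref{subsec:prep}: first prepare the uniform superposition $(1/\sqrt{M})\sum_\ell \ket{\ell}$ using $m$ Hadamards on the ancilla register, then apply the uniformly controlled rotation in Eq.~\eqref{eq:ucr} that rotates one additional qubit by the angle $\theta_\ell$ conditioned on the register value $\ket{\ell}$. The standard M\"ott\"onen--Vartiainen--Bergholm--Salomaa decomposition~\cite{MVB+04} realizes a uniformly controlled single-qubit rotation on $m$ controls using $\mathcal{O}(2^m)=\mathcal{O}(M)$ \textsc{cnot} and single-qubit rotation gates. The resulting state has success amplitude $1/\sqrt{M}$ on the flag $\ket{0}$, so by the discussion in \S\ref{subsec:success}, $\Theta(\sqrt{M})$ rounds of oblivious amplitude amplification deterministically produce the target state. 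Each round wraps the $\mathcal{O}(M)$-gate preparation together with reflections of cost $\mathcal{O}(m)$ (Lemma~\ref{lemma:reflect}), so the dominant contribution is $\mathcal{O}(M)\cdot\mathcal{O}(\sqrt{M})=\mathcal{O}(M^{3/2})$ elementary gates, using one extra ancilla qubit (the rotation target, reused for amplification). The identical procedure compiles \textsc{unprep}, since the $\mathrm{sign}(h_\ell)$ factors only modify the angles~$\theta_\ell$.

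The main obstacle is essentially bookkeeping rather than construction: the nontrivial building blocks (the $\mathcal{O}(m)$-gate incrementer, the M\"ott\"onen decomposition of uniformly controlled rotations, and oblivious amplitude amplification at a known amplitude) are all off-the-shelf, and the proof amounts to assembling their gate counts and verifying that the ancilla/borrowed-qubit budget matches the statement.
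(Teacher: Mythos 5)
Your proposal is correct and follows essentially the same route as the paper's proof: for \textsc{linprep}, counting the $\mathcal{O}(M)$ rotations and $\mathcal{O}(M)$ increment/decrement gates from the circuit of FIG.~\ref{fig:binCirc}, with each incrementer costing $\mathcal{O}(m)$ gates and $m$ borrowed qubits; for \textsc{prep}/\textsc{unprep}, the uniform superposition plus the M\"ott\"onen-style uniformly controlled rotation at $\mathcal{O}(M)$ gates, amplified over $\Theta(\sqrt{M})$ rounds to give $\mathcal{O}(M^{3/2})$. Your added bookkeeping (the zero-padding decrements, the reflection cost per round, and the observation that \textsc{unprep} differs only in the angles) is consistent with, and slightly more explicit than, what the paper writes.
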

\begin{proof}
  The \textsc{linprep} can be implemented using 
  $\mathcal{O}(M)$ rotation gates and $\mathcal{O}(M)$ increment and inverse of increment gates by the procedure given in \S\ref{subsec:prep}.
  The increment gate on $m=\ceil{\log_2M}$ qubits can be implemented using $m$ borrowed qubits and $\mathcal{O}(m)$ elementary gates~\cite{gid_increment}, so the overall gate cost of \textsc{linprep} is $\mathcal{O}(M\log_2M)$.
  As per \S\ref{subsec:prep}, \textsc{prep} and \textsc{unprep} operations can be implemented by preparing the uniform superposition state on $m$ qubits, applying the uniformly controlled rotation in Eq.~\eqref{eq:ucr}, and $\mathcal{O}(\sqrt{M})$ rounds of amplitude amplification.
  The uniform superposition state is prepared by $m$ Hadamard gates, and the uniformly controlled rotation can be implemented by $\mathcal{O}(M)$ \textsc{cnot} and rotation gates~\cite{MVB+04}.
  Therefore, the overall gate cost of \textsc{prep} and \textsc{unprep} is $\mathcal{O}(M^{3/2})$.
\end{proof}

\subsection{Complexity of single-level QWT}
\label{subsec:sQWTcost}

We now build upon the computational cost of the key subroutines analyzed in the previous section to obtain the computational cost of executing a single-level QWT.
To this end, we mainly use Eq.~\eqref{eq:AAoprator} and Eq.~\eqref{eq:OAA}.
By these equations, a single-level QWT is achieved by performing three rotation gates and
\begin{itemize}
    \item Two \textsc{pqwt} and one $\textsc{pqwt}^\dagger$, which by Eq.~\eqref{eq:pqwt} needs performing two \textsc{select} and one $\textsc{select}^\dagger$;
    two \textsc{prep} and one $\textsc{prep}^\dagger$;
    two \textsc{unprep} and one $\textsc{unprep}^\dagger$;
    and one controlled-\textsc{ushift};
    \item Two $(m+1)$-qubit reflection $R_{m+1}$.
\end{itemize}
Therefore, by Lemmas~\ref{lemma:select},\,\,\ref{lemma:reflect},\,\ref{lemma:prep} and Corollary~\ref{cor:cushift} , the gate cost $\mathcal{G}({1\textsc{qwt}})$ for executing a single-level QWT is
\begin{equation}
    \mathcal{G}({1\textsc{qwt}}) =
    3\mathcal{G}(\textsc{select}) +
    6\mathcal{G}(\textsc{prep}) +
    \mathcal{G}(\text{controlled-}\textsc{ushift})+
    2\mathcal{G}(R_{m+1}) + 3
    \in
    \mathcal{O}(n) + 
    \mathcal{O}(M^{3/2})
\end{equation}
where $m=\ceil{\log_2M}$ in our application; $M$ is the wavelet order.
The number of ancilla qubits used is $m+1$:
$m$ ancillae are used for the state-preparation step, and one extra ancilla is the parity qubit \texttt{par}, which is also used in the amplitude amplification~step.

We remark that the borrowed qubits in executing \textsc{prep}, \textsc{select}, controlled-\textsc{ushift} and reflection operations, in Lemmas~\ref{lemma:select}--\ref{lemma:prep}, are borrowed from the portion of quantum registers that these operations do not act on them.
For instance, the $m-2$ borrowed qubits in Lemma~\ref{lemma:reflect} for executing the $m$-qubit reflection $R_m$ could be any $m-2$ qubits of the $n$ qubit register that $R_m$ does not act on them. 
For \textsc{select}, the borrowed qubit is needed to implement the four-bit Toffoli gate, see proof of Lemma.~\ref{lemma:select}, and this qubit could be any qubit in the circuit that the four-bit Toffoli gate does not act on it.
We also remark that the $m+1$ ancilla qubits in Corollary~\ref{cor:cushift} needed for controlled-\textsc{ushift} are qubits of the single-qubit \texttt{par} register and $m$-qubit \texttt{anc} register.
This operation is executed after the amplitude amplification, see FIG.~\ref{fig:1QWT}, when \texttt{par} and \texttt{anc} are in the all-zero state.

Putting all together, the overall gate cost for implementing the single-level QWT is
$\mathcal{O}(n) + \mathcal{O}(M^{3/2})$ and the number of ancilla qubits is $\ceil{\log_2M}+1$.
This is the computational cost reported in Theorem~\ref{theorem:1QWT}.

\subsection{Complexity of multi-level and packet QWTs}
\label{subsec:dQWTcost}

Here we analyze the complexity of implementing a $d$-level and packet QWTs, thereby establishing Theorem~\ref{theorem:dQWT} and Theorem~\ref{theorem:pQWT}.
By FIG.~\ref{fig:dQWT}, implementing a multi-level QWT is achieved by implementing multiply-controlled single-level QWTs.
Our strategy is to break down each multiply-controlled unitaries in terms of multi-bit Toffoli gates and single-controlled unitary.
We then use a compilation for a controlled single-level QWT and an ancilla-friendly compilation for multi-bit Toffoli gates to achieve an efficient yet ancilla-friendly implementation for a multi-level QWT.
The packet QWT, however, is achieved by a sequence of single-level QWTs without controlled qubits, as shown in FIG.~\ref{fig:dQWT}.

Before describing the specifics of our implementation strategy, we first state the complexity of the $\ket{1}$-controlled single-level QWT in the following lemma.
We then build upon this complexity to establish the complexity of multi-level QWT.
\begin{lemma}
\label{lemma:c1QWT}
    The controlled single-level QWT on $n$ qubits, associated with a wavelet of order $M$, can be achieved using $\ceil{\log_2M}+2$ ancilla qubits and $\mathcal{O}(n)+\mathcal{O}(M^{3/2})$ elementary gates.
\end{lemma}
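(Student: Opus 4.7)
The plan is to lift the uncontrolled implementation from Theorem~\ref{theorem:1QWT} to a controlled one by taking the circuit that realizes the unitary $V=\mathcal{A}^3(R(\theta)\otimes\textsc{pqwt})$ of Eq.~\eqref{eq:AAoprator}--\eqref{eq:OAA} and attaching the external control wire to every elementary gate inside. Since $V\ket{0}\ket{0^m}\ket{\psi}=\ket{0^{m+1}}W\ket{\psi}$, the gate-by-gate controlled version directly realizes a controlled single-level QWT. On top of the $\ceil{\log_2 M}+1$ ancillas used in Theorem~\ref{theorem:1QWT} (the $m$-qubit \texttt{anc} register and the \texttt{par} qubit), I would allot exactly one extra ancilla so that the total matches $\ceil{\log_2 M}+2$, and use it as the target of the AND-trick needed to absorb the additional control on each doubly-controlled sub-block.

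I would then walk through the constituent subroutines and check that adding a single external control to each preserves the asymptotic cost. The controlled $R(\theta)$ is standard at $\mathcal{O}(1)$. Each $R_{m+1}$ reflection, which by the proof of Lemma~\ref{lemma:reflect} is an $(m+2)$-bit Toffoli wrapped in Hadamards and NOTs, becomes an $(m+3)$-bit Toffoli and is still compiled in $\mathcal{O}(m)$ gates by Lemma~\ref{lemma:multiToffoli}. Controlled \textsc{prep}/\textsc{unprep} are obtained by prepending a control to every rotation, CNOT, and multi-bit Toffoli in the construction of Lemma~\ref{lemma:prep}, preserving the $\mathcal{O}(M^{3/2})$ cost. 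For controlled \textsc{select} (built in Lemma~\ref{lemma:select} from CNOTs, Toffolis, a controlled $Z$, SWAPs, and singly-controlled modular \textsc{add}/\textsc{sub}), the critical step is converting each controlled \textsc{add}/\textsc{sub} into its doubly-controlled version: AND the external control with the parity control qubit into the extra ancilla via a Toffoli, apply the singly-controlled arithmetic compilation of Ref.~\cite{CDK+04}, then uncompute the Toffoli; this keeps the scaling at $\mathcal{O}(n)$. The final controlled \textsc{ushift} of Corollary~\ref{cor:cushift} is handled analogously at cost $\mathcal{O}(m)$.

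Summing three controlled \textsc{select}s at $\mathcal{O}(n)$ each, six controlled \textsc{prep}/\textsc{unprep}s at $\mathcal{O}(M^{3/2})$ each, one doubly-controlled \textsc{ushift} at $\mathcal{O}(m)$, two controlled $R_{m+1}$ reflections at $\mathcal{O}(m)$, and three controlled rotations at $\mathcal{O}(1)$ yields the claimed $\mathcal{O}(n)+\mathcal{O}(M^{3/2})$ gate count. The borrowed qubits needed for the various multi-bit Toffolis are drawn from registers outside each gate's support, exactly as in the cost analysis of Theorem~\ref{theorem:1QWT}.

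The main obstacle will be confirming that a single additional ancilla suffices throughout — in particular, that the AND-uncompute trick for the doubly-controlled modular arithmetic inside \textsc{select} can reuse the same ancilla across different calls, and that the multi-bit Toffolis arising from controlled reflections and from controlling the multi-bit Toffolis inside \textsc{prep} can borrow idle qubits from elsewhere in the circuit rather than demanding new ones. Verifying this bookkeeping is what locks in the tight $\ceil{\log_2 M}+2$ ancilla count rather than a more conservative $\ceil{\log_2 M}+\mathcal{O}(1)$ bound.
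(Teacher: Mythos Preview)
Your proposal is correct and follows essentially the same route as the paper: control every elementary gate in the single-level QWT circuit of FIG.~\ref{fig:1QWT}, note that the already-controlled \textsc{add}/\textsc{sub}/\textsc{ushift} become doubly controlled and can be reduced back to singly-controlled via two \textsc{toffoli}s and one fresh ancilla (the AND-trick you describe), and observe that all remaining subroutines keep their asymptotic cost under a single additional control. The paper's proof is terser and does not spell out the controlled reflections or the borrowed-qubit bookkeeping as you do, but the idea and the ancilla accounting are the same.
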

\begin{proof}
    By the circuit in FIG.~\ref{fig:1QWT}, a controlled single-level QWT needs preforming double-controlled-\textsc{sub}, -\textsc{add} and -$\textsc{ushift}$ operations, and single-controlled \textsc{prep} and \textsc{unprep} operations.
    Each \textsc{cnot} is transformed to a \textsc{toffoli}, each \textsc{swap} is transformed to three \textsc{toffoli} gates and~$R$ is transformed to controlled-$R$.
    A double-controlled operation can be reduced to a single-controlled operation using two \textsc{toffoli} gates and one ancilla qubit.
    By the discussion in the proof of Lemma~\ref{lemma:select}, the controlled-\textsc{add} (-\textsc{sub}) can be compiled using $\mathcal{O}(n)$ \textsc{cnot}, \textsc{toffoli} gates.
    The other ancilla qubits are the $m$ qubits used for state preparation and the parity qubit.
    Altogether with Corollary~\ref{cor:cushift} prove the lemma.
\end{proof}
We now proceed with the complexity of $d$-level QWT.
Let the integer $s$, with $1\leq s\leq d$, represent the level of a QWT.
Then for the level $s=r+1$ we need to implement $\ket{0^r}$-controlled-$W_{n-r}$, where $W_{n-r}$ is the single-level QWT on $n-r$ qubits.
For simplicity of cost analysis, we map all $\ket{0}$-controlled operations in FIG.~\ref{fig:dQWT}(b) to $\ket{1}$-controlled operations;
this can be achieved by $2(d-1)$ \textsc{not} gates for $d$-level QWT as in FIG.~\ref{fig:dQWT}(c).
For $r\geq 2$, we implement $\ket{1^r}$-controlled-$W_{n-r}$ by a single ancilla qubit, two $(r+1)$-bit Toffoli gates and one controlled-$W_{n-r}$ as shown in FIG.~\ref{fig:dQWT}(c).
Notice that $s=1$ corresponds to a single-level QWT on $n$ qubits and $s=2$ corresponds to a controlled single-level QWT on $n-1$ qubits.

The gate cost for the controlled single-level QWT on $n-r$ qubits is $\mathcal{O}(n-r)$ by Lemma~\ref{lemma:c1QWT}, disregarding the cost with respect to $M$, and the gate cost for the $(r+1)$-bit Toffoli gate is $\mathcal{O}(r)$ by Lemma~\ref{lemma:multiToffoli}.
Hence the gate cost for each level, including the first and second levels, is $\mathcal{O}(n)$.
We also have an additional gate cost of $\mathcal{O}(M^{3/2})$ for each level associated with the cost of implementing \textsc{prep} and \textsc{unprep}.
We remark that only a single ancilla qubit is used for all levels;
the ancilla qubit starts and ends in $\ket{0}$ for each level to be reused in the next level, as illustrated in FIG.~\ref{fig:dQWT}(c).
Putting all together, we arrive at the computational cost stated in Theorem~\ref{theorem:dQWT} for a $d$-level QWT.

Because the packet QWT does not have multi-controlled operations (see FIG.~\ref{fig:dQWT}(a)), its gate cost simply follows from the cost of the single-level QWT.
The single-level QWT acts on $n-r$ qubits at level $s=r+1$ and has the gate cost $\mathcal{O}(n-r)$ by Theorem~\ref{theorem:1QWT}.
The gate cost for all levels $1\leq s\leq d$ is therefore $\mathcal{O}(dn-d(d-1)/2)$.
We also have an additional gate cost of $\mathcal{O}(M^{3/2})$ for each level associated with the cost of implementing \textsc{prep} and $\textsc{unprep}$, yielding the overall gate cost stated in Theorem~\ref{theorem:pQWT}.
We note that the packet QWT does not need the extra ancilla qubit used in multi-level QWT for implementing the multi-controlled operations.

\section{Discussion and conclusion}
\label{sec:conclusion}

Wavelets and their associated transforms have been extensively used in classical computing. The basis functions of wavelet transforms have features that make such transforms advantageous for numerous applications over their established counterpart, the Fourier transform. However, prior works on developing a quantum analog for wavelet transforms were limited to a few representative cases.
In this paper, we presented quantum algorithms for executing any wavelet transform and a generalized version, the wavelet packet transform, on a quantum computer; the algorithms work for any wavelet of any order. We have established the computational complexity of our algorithms in terms of three parameters involved in wavelet transforms: the wavelet order~$M$, the level~$d$ of wavelet transform, and the number of qubits~$n=\log_2N$ the QWT acts on, with $N$ the dimension of the kernel matrix associated with the wavelet transform.

The core idea of our approach is to express the kernel matrix as a linear combination of~$M$ unitary operations that are simple to implement on a quantum computer and use the LCU technique to construct a probabilistic procedure for implementing the desired QWT. We then make the implementation deterministic using the known success probability of the probabilistic procedure by only a few (two or three) rounds of amplitude amplification.
The gate cost of our algorithm for single-level QWT scales optimally with~$n$, the number of qubits, for the case that the wavelet order~$M$ is constant. Indeed, the order parameter used in practical applications is constant, typically in the range of $2\leq M\leq 20$~\cite{urb08,Bey92,BCR91}.
We also demonstrated that the wavelet filter coefficients become negligibly small for larger values of the wavelet order, making the cost of our algorithms effectively independent of~$M$ for practical applications.
In contrast, the transformation level~$d$ scales linearly with the number of qubits, or~$\log_2N$, for practical applications. Because the value of~$d$ is upper-bounded by~$n$, the gate cost of multi-level and packet QWTs scales as~$\mathcal{O}(n^2)$ in the worst case.
Even for the worst case, our algorithm improves the gate cost of prior works on the second- and fourth-order Daubechies QWT from $\mathcal{O}(n^3)$ to $\mathcal{O}(n^2)$. 

We remark that our approach requires a number of ancilla qubits that scales as $\log_2M$ with the wavelet order.
The number of ancilla qubits would be a small constant number considering the range of wavelet order or the magnitude of wavelet coefficients in practical applications.
A potential area for further exploration is constructing ancilla-free quantum algorithms for all QWTs.
Constructing such algorithms would be valuable for early fault-tolerant quantum computers with limited qubits and is plausible because QWTs are unitary transformations.
More importantly, a primary area for future research is exploring the opportunities offered by quantum wavelet transforms in quantum algorithms, particularly in simulating quantum systems~\cite{BRS+15,BSB+22,BNW+23} and image processing~\cite{ME22,TZZ+23,ZHH+24} where wavelet transforms could be advantageous over the established Fourier transform.

\section*{Acknowledgements}
We acknowledge the generous support and funding of this project by the Defense Advanced Research Projects Agency (DARPA) under the grant number HR0011-23-3-0021.
We also acknowledge support from the Canada 150 Research Chairs program and NSERC-IRC. A.A.-G. also acknowledges the generous support of Anders G. Frøseth.


\bibliography{references}
\end{document}